\newcommand{\R}{\ensuremath{\mathbb{R}}}
\theoremstyle{plain}
\newtheorem{proposition}{Proposition}[]
\theoremstyle{remark}
\theoremstyle{definition}
\DeclareMathOperator \grad {grad}
\DeclareMathOperator \ric {Ric}
\DeclareMathOperator \interior {int}
\DeclareMathOperator \isgn {isgn}
\newcommand{\pdef}{\mathrel{\mathop:}=}
\begin{document}

\title[]{Geometric Analysis of Ori-type Spacetimes}
\author{J. Dietz}

\author{A. Dirmeier$^2$ and M. Scherfner$^2$}
\address{$^2$Department of Mathematics, Technische Universit{\"a}t Berlin,
Str.~d.~17.~Juni 136, 10623 Berlin, Germany}

\email{scherfner@math.tu-berlin.de}

\begin{abstract}
In 1993 A.~Ori \cite{Ori93} presented spacetimes violating the chronology condition in order to answer the question whether a time machine construction has to violate the weak energy condition or not. Later, in 2005 \cite{Ori05}, he constructed a class of time machine solutions with compact vacuum core. Both classes include an interesting global structure and it is possible to obtain closed timelike curves. Besides we focus on the geometric structure, in particular symmetries and geodesics, if feasible, and visualize several aspects.
\end{abstract}

\maketitle

\section{Introduction} 

The spacetimes analyzed here, known as Ori-type spacetimes,  were foremost described in \cite{Ori93} and \cite{Ori05}. The first paper presents a time machine model in which closed timelike curves (CTCs) evolve in a bounded region of
space from a well-behaved spacelike initial slice. This slice $S$, just as the entire spacetime, is asymptotically
flat and topologically trivial. In addition, this model fulfills the weak energy condition on $S$
and up until and beyond the time slice, displaying the causality violation.

The second paper uses particular vacuum solutions to construct time machine models where the causality violation occurs inside an empty torus, constituting the core of the time machine. In that case the matter field surrounding the (empty) torus satisfies the weak, dominant, and strong energy conditions.

Here we will comprehensively discuss the geometric and CTC structure of the class of Ricci flat chronology violating spacetimes given in \cite{Ori05}. The analysis for the other class discussed in \cite{Ori93} will only be given as a supplement, since---although it is the predecessor in some sense---the details are much harder to reveal because of the more complicated metric tensor, so here the research is still going on. 

The structure of other particular Ori-type spacetimes---pseudo Schwarzschild and pseudo Kerr---are discussed in \cite{DieDirSch}.

Before starting with our main investigations we describe the important underlying time machine structure.

\section{Preliminaries}

Particles move through spacetime along causal curves, null curves are the trajectories of massless particles whereas particles with non-vanishing mass are described by timelike curves. Suppose now that we are given a spacetime $M$ containing CTCs. The physical interpretation would be that a particle moving once around a CTC would encounter itself in its own past. A similar argument can be adduced in the case of null curves. This is the reason why spacetimes with closed causal curves (CCCs) have been qualified as time machine spacetimes.

While it is easy to construct spacetimes containing CTCs or CCCs, it is more difficult to find examples of spacetimes that are---to some extent---physically reasonable. In \cite{Ori07}, A.~Ori presents a list, setting constraints on what should be classified as a realistic time machine model.

One essential notion in General Relativity is the Cauchy development or domain of dependence of a subset $\mathcal A$ of a spacetime $M$. It is connected with the problem of formulating the Einstein equations as a well posed initial value problem and the concept of global hyperbolicity.

From a realistic point of view it is desirable to obtain CCCs as the unevitable consequence of a Cauchy development. However, the interior of the Cauchy development of an achronal subset $\mathcal A$ is globally hyperbolic and, therefore, excludes any CCCs. The most one can hope for is that there are points in the boundary of the Cauchy development which lie on CCCs. This is the central idea of a time machine structure (TM-structure) for a spacetime.

We will consider $M$ to be a four-dimensional manifold and a metric tensor $g$ on $M$ with signature $(-+++)$, such that the Lorentzian manifold $(M,g)$ together with a fixed time-orientation is a spacetime. Let $\mathcal V(M)$ denote the set of all points $p \in M$ that lie on some closed causal curve in $M$. We shall refer to $\mathcal V(M)$ as the causality violating region or the time machine. If $p \in \mathcal V(M)$, we say that causality is violated at $p$. The causality condition is said to hold on a subset $\mathcal U \subset M$ if $\mathcal V(M) \cap \mathcal U = \emptyset$.

\label{def-TM-structure}
 We say that $(M,g)$ has TM-structure if the following three properties hold for $M$:
\begin{enumerate} [(TM1)]
 \item There exists an open subset $\mathcal U$ of $M$ on which the causality condition holds,
 \item there is a spacelike hypersurface $\mathscr S$ contained in $\mathcal U$, such that
 \item for an achronal, compact subset $\mathscr C \subset \mathscr S$ the future domain of dependence of $\mathscr C$ contains points in its boundary at which causality is violated, i.e., $\overline{D^+(\mathscr C)} \cap \mathcal V(M) \neq \emptyset$. In this case the causality violating region $\mathcal V(M)$ is said to be \textit{compactly constructed}.
\end{enumerate}

This seems to be a reasonable condition for any spacetime that may be viewed as a time machine model. Although there definitely are many more additional conditions one could impose, we will focus on TM-structure here. It should be noted that models with TM-structure can satisfy the weak, strong and dominant energy condition (see \cite{HawEllis}) for the matter content of the spacetime if properly constructed.  

{\em Example:} Consider the smooth manifold $ M \pdef \mathbb{S}^1 \times \R^3$ with the global coordinate system $(\phi,x,y,z)$ on $M$, $\phi$ being a circular coordinate. Define the metric tensor on $M$ by
\begin{equation}
 g \pdef -d\phi \otimes d\phi + dx \otimes dx + dy \otimes dy + dz \otimes dz.
\end{equation}
Then $(M,g)$ becomes a vacuum spacetime, the so-called (four-dimensional) Lorentz cylinder if we require the timelike vector field $\partial_\phi$ to be future pointing. We see immediately that all $\phi$-coordinate lines are CTCs. Hence, every point of $M$ lies on a closed timelike curve.

It can be shown that $\mathcal V(M) = \emptyset$ for any globally hyperbolic spacetime $M$ (cf. \cite{ONeill}, Chapter 14). On the other hand, in the case of the Lorentz cylinder, we have $\mathcal V(M) = M$. Mainly because of physical reasons (or everyday experience) we do not want to consider the Lorentz cylinder as a spacetime with TM-structure.


The idea behind a TM-structure is the following: One can think of $D^+(\mathscr C)$ as the set of points $p \in M$ that are predictable from $\mathscr C$ in the sense that no inextendible causal curve through $q \in D^+(\mathscr C)$ can avoid $\mathscr C$. Because the set $\interior D^+(\mathscr C)$ does not contain points at which causality is violated (for it is a subset of $D^+(\mathscr C)$ on which the so-called strong causality condition holds), the most one can hope for is that there are points in the boundary of $D^+(\mathscr C)$ contained in $\mathcal V(M)$. 

As it can be seen from the definition of the TM-structure, spacelike hypersurfaces are an important entity in the context of TM-structures. Therefore, we include the following criterion.
\begin{proposition}
 Let $f \colon M \to \R$ be a smooth function on a Lorentzian manifold $M$ and $a \in \R$ be in the image of $f$. Then the sets
\begin{gather}
 H_t \pdef \{ p \in f^{-1}(a) \colon \langle (\grad f)_p, (\grad f)_p \rangle >0 \}	\\
\intertext{and}	
 H_s \pdef \{ p \in f^{-1}(a) \colon \langle (\grad f)_p, (\grad f)_p \rangle <0 \}
\end{gather}
are hypersurfaces in $M$ and $H_t$ is timelike, $H_s$ spacelike.
\end{proposition}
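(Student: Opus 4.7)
The plan has two parts: first show that $H_t$ and $H_s$ are smooth hypersurfaces, then determine their causal character from a pointwise linear-algebra argument.

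For the smoothness part, I would observe that the sets
\[
 U_t \pdef \{ p \in M : \langle (\grad f)_p, (\grad f)_p \rangle > 0 \}, \quad U_s \pdef \{ p \in M : \langle (\grad f)_p, (\grad f)_p \rangle < 0 \}
\]
are open in $M$ because $p \mapsto \langle (\grad f)_p, (\grad f)_p\rangle$ is continuous. On $U_t \cup U_s$ the gradient $\grad f$ is non-null, hence in particular nonzero, so the differential $df_p = g((\grad f)_p, \cdot\,)$ does not vanish at any $p \in U_t \cup U_s$. Thus $a$ is a regular value of the restrictions $f|_{U_t}$ and $f|_{U_s}$, and the regular value theorem gives that $H_t = f^{-1}(a) \cap U_t$ and $H_s = f^{-1}(a) \cap U_s$ are smooth hypersurfaces of $M$.

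For the causal character, I would identify the tangent space at a point. Since $H_t$ and $H_s$ are level sets of $f$, for $p \in H_t$ or $p \in H_s$ one has
\[
 T_p H = \ker df_p = \{ v \in T_p M : g_p((\grad f)_p, v) = 0 \} = (\grad f)_p^{\perp}.
\]
Now I would invoke the standard fact from Lorentzian linear algebra: in a Lorentzian vector space $(V,g)$ of signature $(-+++)$, a non-null vector $v$ yields an $g$-orthogonal splitting $V = \R v \oplus v^{\perp}$, and the signatures of the two summands add to that of $V$. Consequently, if $v$ is timelike ($g(v,v) < 0$), then $v^{\perp}$ carries an induced positive definite inner product, and if $v$ is spacelike ($g(v,v) > 0$), then $v^{\perp}$ inherits a Lorentzian inner product of signature $(-++)$.

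Applying this with $v = (\grad f)_p$: on $H_s$ the gradient is timelike, so the induced metric on $T_p H_s$ is Riemannian, which by definition means $H_s$ is a spacelike hypersurface; on $H_t$ the gradient is spacelike, so $T_p H_t$ inherits a Lorentzian metric, which is precisely the definition of a timelike hypersurface. There is no serious obstacle: the only point requiring care is the non-vanishing of $\grad f$ on $U_t \cup U_s$ (needed so that the preimage theorem applies), and the signature count for the orthogonal complement of a non-null vector, which I would either cite or verify by completing $v/\sqrt{|g(v,v)|}$ to a $g$-orthonormal frame.
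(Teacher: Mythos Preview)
Your proof is correct and follows essentially the same route as the paper: restrict to the open set where $\grad f$ has the prescribed causal type, use non-vanishing of $df$ there to obtain a hypersurface, identify the tangent space with $(\grad f)_p^{\perp}$, and read off the causal character from the orthogonal decomposition $T_pM = T_pH \oplus \R(\grad f)_p$ and the signature count. The paper's version is slightly more terse but the argument is identical in structure.
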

\begin{proof}
 The set $U \pdef \{ p \in M \colon \langle (\grad f)_p, (\grad f)_p \rangle >0 \} \subset M$ is open, hence a Lorentzian manifold in the usual manner. Because of the defining property of $U$ the one form $df|_U$ is nowhere vanishing. Therefore, $H_t$ is a hypersurface in $U$, hence in $M$. If $ p \in H_t$ and $u \in T_pH_t$, we have
\begin{equation}
 g( (\grad f)_p , u) = u(f) = u(f|_{H_t}) =0
\end{equation}
since $f$ is constant on  $H_t$. This proves the decomposition
\begin{equation}
 T_pM = T_pH_t \oplus \R(\grad f)_p
\end{equation}
and because $(\grad f)_p$ is spacelike this implies that $g|_{T_pH_t}$ is of index $1$, i.e., $H_t$ is timelike. 
If we consider $H_s$, almost the same arguments apply, except that now $(\grad f)_p$ is timelike. Hence $T_pH_s$ is a spacelike subspace of $T_pM$, i.e, $H_s$ is spacelike.
\end{proof}
In coordinates $x^1,\ldots,x^n$ the differential $df$ of $f$ is $df = \frac{\partial f}{\partial x^i} dx^i$ and
\begin{equation}
 \langle \grad f, \grad f \rangle = g^{ij} \frac{\partial f}{\partial x^i} \frac{\partial f}{\partial x^j}.
\end{equation}
Using this formula and applying the preceding proposition to the coordinate neighborhood, one obtains:
\begin{proposition} \label{prop-hypersurfaces-in-coord}
 If $x^1,\ldots,x^n$ are local coordinates of a Lorentzian manifold $M$, the coordinate slices $x^k = const$ for $1 \leq k \leq n$ are 
\begin{enumerate}
 \item spacelike hypersurfaces if $g^{kk} <0$,
 \item timelike hypersurfaces if $g^{kk} >0$
\end{enumerate}
for all points $p$ in the coordinate neighborhood.
\end{proposition}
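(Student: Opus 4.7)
The plan is to apply the preceding proposition to the coordinate function $f = x^k$, regarded as a smooth map on the coordinate neighborhood $U \subset M$. Since $U$ is open in $M$, it inherits a Lorentzian structure of the same signature, and $x^k \colon U \to \R$ is obviously smooth. Fix any value $a$ in the image of $x^k|_U$; then the coordinate slice $\{x^k = a\}$ is precisely the level set $(x^k)^{-1}(a)$, so the preceding proposition will describe its causal type as soon as we determine the sign of $\langle \grad x^k, \grad x^k \rangle$.

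To compute that sign I would use the coordinate formula $\langle \grad f, \grad f\rangle = g^{ij}\frac{\partial f}{\partial x^i}\frac{\partial f}{\partial x^j}$ displayed just before the statement. Since $\partial x^k / \partial x^i = \delta^i_k$, the double sum collapses to a single term, giving
\[
\langle \grad x^k, \grad x^k \rangle = g^{kk}
\]
at every point of $U$. Under assumption (i) we have $g^{kk} < 0$ on the whole coordinate neighborhood, so the slice lies entirely inside the set $H_s$ of the preceding proposition and is therefore a spacelike hypersurface. Under (ii) we have $g^{kk} > 0$ everywhere and the slice lies inside $H_t$, which is timelike.

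The only point requiring a brief word is that the preceding proposition split a given level set into the two pieces $H_s$ and $H_t$ pointwise, whereas here the sign hypothesis on $g^{kk}$ is global on the coordinate neighborhood. Consequently the entire slice falls into only one of $H_s$, $H_t$, and the hypersurface and causal-type conclusions transfer directly to it. There is no genuine obstacle beyond matching the sign conventions: $\grad x^k$ timelike forces its perpendicular slice to be spacelike, and $\grad x^k$ spacelike forces it to be timelike, which is exactly the correspondence stated in (i) and (ii).
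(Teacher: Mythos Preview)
Your argument is correct and matches the paper's approach exactly: apply the preceding proposition with $f = x^k$ on the coordinate neighborhood and use the displayed formula $\langle \grad f, \grad f\rangle = g^{ij}\,\partial_i f\,\partial_j f$ together with $\partial x^k/\partial x^i = \delta^k_i$ to obtain $\langle \grad x^k, \grad x^k\rangle = g^{kk}$. The paper simply states this in one line before the proposition, so there is nothing to add.
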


\section{A Class of Ricci Flat Chronology Violating Spacetimes} \label{chap-Ori05}

Here we investigate spacetimes described in \cite{Ori05} and we start with the smooth manifold $M \pdef \R^3 \times \mathbb{S}^1$ and introduce coordinates $(T,x,y,\phi)$ on $M$, where $T$, $x$ and $y$ are natural coordinates on $\R^3$ and $\phi$ is a circular coordinate on $\mathbb{S}^1$. For any smooth function $f \colon \R^2 \times \mathbb{S}^1 \to \R$ we define the metric tensor $g$ on $M$ by
\begin{equation}
 g \pdef - dT \otimes_s d\phi + dx \otimes dx + dy \otimes dy + ( f(x,y,\phi) -T) d\phi \otimes d\phi,
\end{equation}
with $\otimes_s$ being the symmetrized tensor product. Emphasizing the role of $f$, we write $M_f$ for the semi-Riemannian manifold $(M,g)$. The periodicity of $f$ in its third argument guarantees that $g$ is well defined on all of $M_f$. Since the matrix of the component functions of $g$ has $\det [g_{ij}] =-1$, the metric tensor is non-degenerate and has Lorentzian signature.


The Ricci tensor of $M_f$ is given by
\begin{equation}
 \ric = - \frac{1}{2} \left( \frac{\partial^2 f}{\partial x^2} + \frac{\partial^2 f}{\partial y^2} \right) d\phi \otimes d\phi,
\end{equation}
so we assume that $f$ satisfies the relation

\begin{equation}
 f_{xx} + f_{yy} =0
\end{equation}
to get Ricci flat spacetimes $M_f$. As the timelike unit vector field
\begin{equation}
 \partial_\phi + \frac{1}{2}(f-T +1) \partial_T
\end{equation}
shows, $M_f$ is time orientable.

\subsection{General Properties}
Before we make an explicit choice for $f$ and embark on a more detailed study of a special example of the described class, we shall consider the general case.

\subsubsection{Closed Timelike Curves in $M_f$}
Of course, it is the topological factor $\mathbb{S}^1$ in $M_f$ and the metric component $g_{\phi \phi} = \nolinebreak f - T$ that is responsible for the appearance of CTCs. 
Given $T$, $x$ and $y$ the curves 
\begin{equation} \label{CTCs_general}
 [-\pi,\pi] \to M, \qquad s \mapsto \gamma_{T,x,y}(s) \pdef (T,x,y,s)
\end{equation}
are closed and timelike provided
\begin{equation}
 f(x,y,s) - T < 0 \qquad \text{for all} \quad s \in [-\pi,\pi].
\end{equation}
Depending on the explicit form of $f$, this is a condition on the coordinates $(T,x,y)$ determining a non-empty subset of $\R^3$ that consists of points sitting on CTCs of the form \eqref{CTCs_general}. 
By Prop.~\ref{prop-hypersurfaces-in-coord}, the hypersurfaces $H_T$ defined by $T=const$ are 

\begin{equation}
 \left.
  \begin{aligned}
  \text{timelike} \\
  \text{spacelike}
 \end{aligned} \right\}
 \qquad \text{if} \quad g^{TT} = T - f \quad \text{is} \qquad
 \begin{cases}
 \; > 0, \\
 \; < 0,
 \end{cases}
\end{equation}
respectively. Therefore, any timelike curve in $H_T$, in particular CTCs, must be contained in that part of $H_T$ where $T - f > 0$. The same result may also be obtained by calculating $g(\gamma',\gamma')$ directly with the condition $dT(\gamma') = 0$. We will see later for one choice of $f$ how these two ingredients---the region of CTCs and the causal character of $H_T$---allow for a TM-structure of $M_f$.

\subsubsection{Killing Vector Fields of $M_f$}\label{sec1}
Let $\mathcal{L}$ be the Lie derivative, such that the Killing equation for a vector field $X$ is given by $\mathcal{L}_Xg=0$. Furthermore we denote the algebra of Killing vector fields on some open subset $U\subset M$ of a manifold $M$ by $i(U)$. We will make repeated use of the following proposition, the proof of which can be found in \cite{Petrov}, p.~61. Here $R$ denotes the curvature tensor of $M$.

\begin{proposition}\label{prop-dimension-Killing}
 If $X \in i(M)$ is a Killing field, then
\[
 \mathcal L_X D^{(k)} R = 0 \qquad \text{for all} \quad k \in \mathbb N,
\]
i.e., the Lie derivative in the direction of $X$ of any covariant derivative of the Riemann curvature tensor vanishes.  Let $L_k$ denote the linear system $(\mathcal L_X D^{(k)} R)_p$ for some $p \in M$ and $k \in \mathbb N$. Then
\[
 \dim i(M) \leq \dim \ker L_k. 
\]
\end{proposition}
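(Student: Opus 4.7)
The plan is to establish $\mathcal{L}_X D^{(k)} R = 0$ by induction on $k$, and then reinterpret this identity as a linear condition on the $1$-jet $(X_p,(DX)_p)$ of $X$ at a point, invoking the classical rigidity of Killing fields.

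First I would show that any Killing field preserves the Levi-Civita connection, in the sense that $\mathcal{L}_X$ commutes with $D$ on tensor fields. Applying the derivation $\mathcal{L}_X$ to the Koszul formula and using $\mathcal{L}_X g = 0$ yields
\begin{equation*}
\mathcal{L}_X(D_Y Z) - D_{\mathcal{L}_X Y} Z - D_Y \mathcal{L}_X Z = 0
\end{equation*}
for all vector fields $Y, Z$. Feeding this back into the definition $R(Y,Z)W = D_Y D_Z W - D_Z D_Y W - D_{[Y,Z]}W$ and using that $\mathcal{L}_X$ is a derivation commuting with the Lie bracket gives $\mathcal{L}_X R = 0$. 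For arbitrary $k$, induction is then immediate: assuming $\mathcal{L}_X D^{(k-1)} R = 0$ and using the commutation of $\mathcal{L}_X$ with $D$ established above, we obtain $\mathcal{L}_X D^{(k)} R = D(\mathcal{L}_X D^{(k-1)} R) = 0$.

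For the dimension bound, the key pointwise fact is that for any tensor field $T$ the value $(\mathcal{L}_X T)_p$ depends only on the $1$-jet $(X_p,(DX)_p)$ of $X$ at $p$, via the universal formula
\begin{equation*}
\mathcal{L}_X T = D_X T + (\text{contractions of } DX \text{ with } T).
\end{equation*}
Applied to $T = D^{(k)} R$, this exhibits the condition $(\mathcal{L}_X D^{(k)} R)_p = 0$ as a linear system $L_k$ on the finite-dimensional space $T_pM \oplus \mathfrak{so}(T_pM)$; the skew-symmetry of the second component is forced by the Killing equation $D_i X_j + D_j X_i = 0$. By the first part of the proof every Killing field on $M$ has its $1$-jet at $p$ lying in $\ker L_k$.

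Finally I would invoke the standard rigidity statement that a Killing field is uniquely determined by its $1$-jet at a single point, a consequence of the second-order linear equation satisfied by every Killing field, which expresses $D^2 X$ in terms of $X$ and the Riemann tensor. Hence the evaluation map $X \mapsto (X_p,(DX)_p)$ from $i(M)$ to $T_pM \oplus \mathfrak{so}(T_pM)$ is injective, its image lies in $\ker L_k$, and the bound $\dim i(M) \leq \dim \ker L_k$ follows. The main obstacle is the verification that $\mathcal{L}_X$ and $D$ actually commute on tensor fields; once this is in hand both claims essentially follow formally, and I would likely reference \cite{Petrov} for the detailed Koszul computation and the Killing ODE rather than reproducing them.
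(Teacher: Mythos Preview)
Your sketch is correct and follows the standard argument: commutation of $\mathcal L_X$ with $D$ for Killing $X$, induction to get $\mathcal L_X D^{(k)}R=0$, then the pointwise linearity of $(\mathcal L_X T)_p$ in the $1$-jet $(X_p,(DX)_p)$ combined with the rigidity of Killing fields. Note, however, that the paper does not actually prove this proposition at all; it simply refers the reader to \cite{Petrov}, p.~61, so there is nothing to compare your approach against beyond the fact that you are supplying precisely the argument the paper outsources. Your closing remark that you would cite Petrov for the Koszul computation and the Killing second-order ODE is thus exactly in line with what the paper does, only you have filled in more of the structure.
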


For arbitrary $f$, depending on all three coordinates $x$, $y$ and $\phi$, there is no reason to expect an abundance of non-trivial, linearly independent Killing vector fields on $M_f$. There is only one obvious and non-tivial Killing vector field, which is actually not even globally defined.

\begin{proposition} \label{killgen}
 For arbitrary $f$ there is only one local Killing vector field $K$ on $M_f$. In coordinates,
 \begin{equation}
  K = e^{-\frac{\phi}{2}} \partial_T.
 \end{equation}
\end{proposition}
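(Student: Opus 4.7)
The plan is to integrate the Killing equation $\mathcal L_X g = 0$ directly in coordinates. Writing $X = A\partial_T + B\partial_x + C\partial_y + D\partial_\phi$ with unknown $A,B,C,D \in C^\infty(M_f)$ and noting that $g_{\phi\phi}=f-T$ is the only non-constant metric component, the identity $(\mathcal L_X g)_{ij} = X^k\partial_k g_{ij} + g_{kj}\partial_i X^k + g_{ik}\partial_j X^k$ gives a tractable first-order system. The diagonal entries at $(TT)$, $(xx)$, $(yy)$ yield $\partial_T D = \partial_x B = \partial_y C = 0$, while the mixed entries $(Tx)$, $(Ty)$, $(xy)$ give $\partial_T B = \partial_x D$, $\partial_T C = \partial_y D$ and $\partial_y B + \partial_x C = 0$.

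Exploiting these identities together with the coordinate dependencies of both sides of each equation, a brief bookkeeping argument forces
\begin{equation*}
 B = T\alpha(\phi) + y k(\phi) + b_0(\phi),\quad C = T g(\phi) - x k(\phi) + c_0(\phi),\quad D = x\alpha(\phi) + y g(\phi) + h(\phi),
\end{equation*}
and the $(T\phi)$ equation $\partial_T A + \partial_\phi D = 0$ integrates to $A = -T(x\alpha' + y g' + h') + A_0(x,y,\phi)$. Splitting $(x\phi)$ and $(y\phi)$ by powers of $T$ yields the linear ODEs $\alpha = 2\alpha'$ and $g = 2g'$ (so $\alpha, g$ are multiples of $e^{\phi/2}$), together with the $T$-independent relations $\partial_x A_0 = f\alpha + y k' + b_0'$ and $\partial_y A_0 = f g - x k' + c_0'$. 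Equating mixed partials of $A_0$ gives the integrability condition $\alpha(\phi) f_y - g(\phi) f_x = -2 k'(\phi)$; its left side carries genuine $(x,y)$-dependence through $f$ while the right side depends only on $\phi$, so for arbitrary $f$ one must have $\alpha \equiv g \equiv 0$ and $k \equiv k_0$ constant.

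Substituting back collapses the unknowns to $B = yk_0 + b_0(\phi)$, $C = -xk_0 + c_0(\phi)$, $D = h(\phi)$ and $A = -T h'(\phi) + x b_0'(\phi) + y c_0'(\phi) + A_2(\phi)$. Inserting these into the remaining $(\phi\phi)$ equation produces an identity polynomial in $T$ and linear in $f, f_x, f_y, f_\phi$. Testing it against explicit harmonic choices such as $f = 0$, $f = 1$, $f = x$, $f = y$, $f = \cos\phi$ successively kills the surviving degrees of freedom: the case $f = 0$ yields the ODEs $b_0' + 2b_0'' = c_0' + 2c_0'' = A_2 + 2A_2' = 0$, $f = 1$ forces $h' \equiv 0$, $f = x$ and $f = y$ force $k_0 = b_0 = c_0 = 0$, and $f = \cos\phi$ eliminates the remaining constant in $h$. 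What is left is the single ODE $A_2 + 2A_2' = 0$, whose one-dimensional solution space is spanned by $A_2(\phi) = e^{-\phi/2}$. Hence every local Killing field is a scalar multiple of $K = e^{-\phi/2}\partial_T$, and $K$ is only locally defined because $e^{-\phi/2}$ is not $2\pi$-periodic in the circular coordinate $\phi$. The main obstacle is the bookkeeping argument establishing the structural form of $B, C, D$ together with the repeated use of arbitrariness of $f$; both amount to linear algebra but involve a number of small case distinctions that must be handled carefully to avoid missing hidden symmetries.
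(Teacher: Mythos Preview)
Your argument is correct, but it takes a genuinely different route from the paper. The paper proceeds in two short steps: first it verifies directly that $K=e^{-\phi/2}\partial_T$ satisfies the Killing system (reducing, under the ansatz $H=P=Q=0$ and $G=G(\phi)$, to the single ODE $G+2G_\phi=0$); second, to rule out further solutions it invokes the curvature criterion of Prop.~\ref{prop-dimension-Killing} and exhibits one explicit choice $f(x,y,\phi)=e^{xy}\cos\phi$ for which the linear system $(\mathcal L_X DR)_p=0$ at $p=(1,1,1,0)$ has a one-dimensional kernel, whence $\dim i\le 1$.

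You instead integrate the full Killing system from scratch, first pinning down the structural form of $B,C,D$ by cross-differentiation, and then using the freedom in $f$ (via the integrability condition $\alpha f_y-g f_x=-2k'$ and subsequent substitutions $f=0,1,x,y,\cos\phi$ in the $(\phi\phi)$-equation) to kill all parameters except the one governed by $A_2+2A_2'=0$. This is more elementary---no curvature tensor, no external lemma---but buys that economy with a heavier bookkeeping load, and what it literally proves is the slightly weaker statement ``$K$ is the only vector field that is Killing for \emph{every} $f$'' rather than the paper's ``there exists a specific $f$ for which $\dim i(M_f)=1$.'' Both readings are consistent with the proposition as phrased, and your derivation is sound; just be aware that the paper's version pins down an explicit $f$ realising the minimum, which your argument does not do directly.
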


\begin{proof}
 First of all, note that $K$ cannot be extended to all of $M_f$. Formally not entirely correct, we may say that this is due to the fact that the component function $e^{-\frac{\phi}{2}}$ of $K$ is not properly periodic in $\phi$. If $\phi$ takes values in $(-\pi,\pi)$ the problem arises at points $p \in M_f$ which are not covered by the coordinate system $(T,x,y,\phi)$. In order to have an atlas of $M_f$ at our disposal, we use a second coordinate system $(T,x,y,\psi)$ with $\psi$ mapping into $(0,2\pi)$. We shall refer to the first coordinate system as A and to the second as B. The transformation formula
\begin{equation}
 \psi(\phi)	 = \left\{
 \begin{aligned} 
  & \phi   	& \qquad	 	& \text{if}  &  0 < & \, \phi<\pi, 		\\
  & \phi + 2\pi  &		& \text{if}  &  -\pi < & \, \phi < 0,
 \end{aligned} \right.	
\end{equation}
implies the coordinate expression
\begin{equation}
K	=	\left\{
\begin{aligned}
 & e^{-\frac{\psi}{2}} \partial_T 	\qquad	& 	& \text{if} &	0< & \, \phi<\pi,	 \\
 & e^{-\frac{\psi}{2}-\pi} \partial_T 		& 	& \text{if} &  -\pi< & \, \phi<0,	
\end{aligned} \right.
\end{equation}
for $K$ in system $B$. If $K$ were extendible to all of $M_f$, using B, the two limits
\begin{align}
 \lim_{\psi \nearrow \pi} K_p &= e^{-\frac{\pi}{2}} \partial_T	\\
\intertext{and}
 \lim_{\psi \searrow \pi} K_p &= \lim_{\psi \searrow \pi} e^{-\frac{\psi}{2}-\pi} \partial_T = e^{-\frac{3\pi}{2}} \partial_T
\end{align}
would necessarily be the same. 
Next we verify that $K$ is indeed a Killing vector field. Generally, for a vector field $X \pdef G\partial_T + H\partial_x + P\partial_y + Q\partial_\phi$ the Killing equation $\mathcal{L}_Xg=0$ is equivalent to the following system of PDE's:
\begin{gather}
\begin{aligned}
 Q_T						& = 0		\\
 H_T - Q_x					& = 0		\\
 P_T - Q_y					& = 0		\\
 P_y						& = 0		\\
 H_x						& = 0		\\
 P_x + H_y					& = 0		
\end{aligned} \notag \\
\begin{aligned}
 G_T - f Q_T + T Q_T  + Q_\phi	  	& = 0		\\
 G_x - f Q_x + T Q_x  - H_\phi	  	& = 0		\\
 G_y - f Q_y + T Q_y  - P_y		& = 0		\\
 -2G_\phi + 2Q_\phi f + Q f_\phi -2 T Q_\phi + H f_x & + P f_y - G = 0
\end{aligned} \notag
\end{gather}
The assumption $H = P = Q = 0$ and the supposition that $G$ depends only on $\phi$ reduces this system of PDEs to the simple ordinary differential equation 
\begin{equation}
 G + 2 G_\phi =0.
\end{equation}
Hence $G(\phi) = e^{-\frac{\phi}{2}}$ and $K$ is a Killing vector field. 
It remains to show that there is a choice for $f$, such that there is no other solution to the Killing equation which is not a constant multiple of $K$. This is the point where Prop.~\ref{prop-dimension-Killing} comes in. If we choose 
\begin{equation}
 f(x,y,\phi) \pdef e^{xy} \cos \phi,
\end{equation}
the equation $\mathcal L_X DR = 0$ evaluated at $(T,x,y,\phi)=(1,1,1,0)$ becomes a linear system with kernel of dimension $1$.
\end{proof}
The fact that $K$ cannot be extended to a global Killing field is due to the topology of $M_f$. If we consider the universal covering $k \colon \R^4 \to M_f$ of $M_f$, assign to $\R^4$ the pull-back metric $k^*g$ and use global coordinates $(T,x,y,z)$ on $\R^4$, the vector field
\begin{equation}
 \tilde K \pdef e^{-\frac{z}{2}} \partial_z
\end{equation}
is a global Killing field on $\R^4$, irrespective of the explicit form of $f$. Changing the $z$-coordinate to the periodic coordinate $\phi$ implies loosing the global Killing field $\tilde K$.

\subsection{One Special Choice for $f$}
We are now going to consider a special example of the class of spacetimes described so far by choosing $f$ to be
\begin{equation}
 f(x,y,\phi) \pdef \frac{a}{2}(x^2-y^2),
\end{equation}
where $a$ is a positive constant. For notational convenience we drop the subscript $f$ in $M_f$ from now on. 
In order to obtain a coordinate system which is better suited to the description of the region of CTCs in $M$ we use the transformation:
\begin{equation}
 \R^3 \times \mathbb{S}^1 \to \R^3 \times \mathbb{S}^1, \qquad (T,x,y,\phi) \mapsto (t,x,y,\phi) 
\end{equation}
with 
\begin{equation}
t \pdef T - \frac{a}{2}(x^2-y^2) + e \rho^2. 
\end{equation}
Here, $e$ is another positive constant and $\rho^2 \pdef x^2 + y^2$. It is clear that $(t,x,y,\phi)$ is a coordinate system. A short calculation yields 
\begin{multline} \label{equ-metric-2005-new}
 g = - dt \otimes_s d\phi + (2e-a)x dx \otimes_s d\phi + (2e+a)y dy \otimes_s d\phi \\
     + dx \otimes dx + dy \otimes dy + (e \rho^2 -t) d\phi \otimes d\phi
\end{multline}
in the new coordinates.

\subsubsection{CTCs in $M$}
For fixed values of $t$, $x$ and $y$ the curves $\gamma_{t,x,y}(s)= (t,x,y,s)$ are closed and 
\begin{equation}
 \begin{cases}
  \text{spacelike} \qquad  &\text{for} \quad t<e\rho^2,		\\
  \text{null}		   &\text{for} \quad t=e\rho^2,		\\
  \text{timelike}	   &\text{for} \quad t>e\rho^2.
 \end{cases}
\end{equation}
For $t<0$ all these curves are spacelike. At $t=0$ the curve $\gamma_{0,0,0}$ is a closed null geodesic, as will follow from later results. 
For a given value of $t$ all curves $\gamma_{t,x,y}$ lie in the hypersurface $H_t$ given by $t=const$. The causal character of $H_t$ can be inferred from
\begin{equation}
 g^{tt} = t + (2e-a)^2 x^2 + (2e+a)^2 y^2 - e\rho^2.
\end{equation}
If we suppose
\begin{equation}
 (2e+a)^2 < e,
\end{equation}
then $H_t$ is spacelike throughout for $t<0$. At $t=0$, $H_t$ is spacelike except at the closed null geodesic $x=y=0$. In the region $t>0$,
\begin{equation}
 H_t \quad \text{is} \quad
\begin{cases}
 \text{spacelike} \qquad   &\text{if $(x,y)$ lies outside $E_t$,}		\\
 \text{timelike}	   &\text{if $(x,y)$ lies inside $E_t$,}
\end{cases}
\end{equation}
where $E_t$ is the ellipse in the $x$-$y$ plane described by $g^{tt}=0$ or
\begin{equation}
 t = [e - (2e-a)^2] x^2 + [e- (2e+a)^2] y^2.
\end{equation}
Thus any CTC in $H_t$ must be entirely contained in $E_t$ (see Fig. \ref{fig-CTCs-2005}).

\begin{figure}[htb]
 \includegraphics{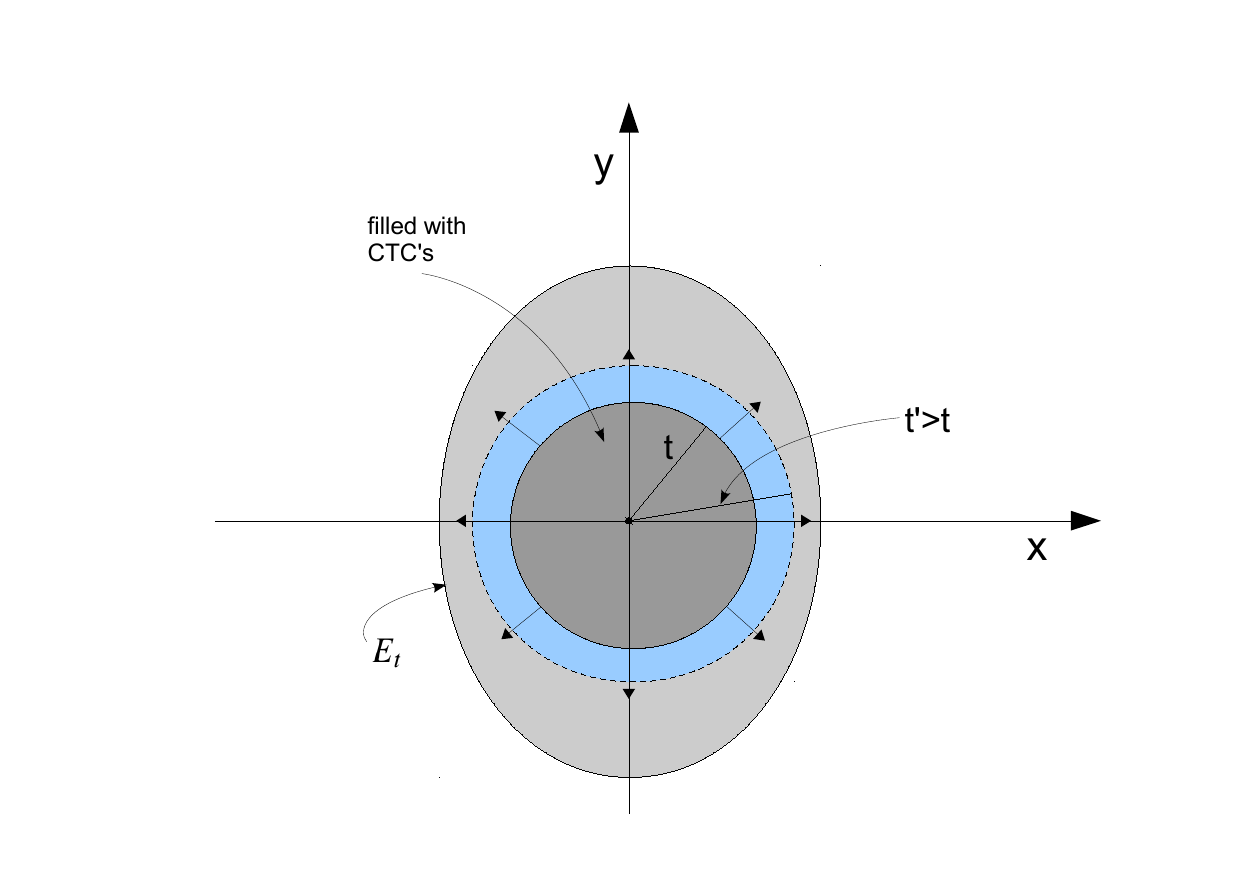}
 \caption{A section $\phi = const$ in $H_t$ and the circle of radius $t$ in the $x$-$y$ plane filled with CTCs of the form $\gamma_{t,x,y}$ (we have set $e=1$). The growth of this circle with $t$ is indicated by the arrows. The ellipse $E_t$ confines the region of CTCs in $H_t$.}
 \label{fig-CTCs-2005}
\end{figure}
In \cite{Ori05}, Ori describes in some detail how to construct a spacelike hypersurface $\mathscr S$ in the region $\{p \in M \colon t(p) \leq 0 \}$ of $M$ that contains a compact subset $\mathscr C$, such that the closed null curve $\gamma_{0,0,0}$ is contained in $\overline{D^+(\mathscr C)}$, making $M$ a candidate for a spacetime with TM-structure.

\subsubsection{Killing Vector Fields of $M$}
We have already seen that due to the topology of $M$ the Killing field $K$, which exists on $M$ independently of the explicit form of $f$, is not a global Killing field. We will be accompanied by this problem throughout this section and, therefore, we concentrate our efforts on the part $U$ of $M$ covered by the coordinates $(t,x,y,\phi)$, where $\phi \in (-\pi,\pi)$. However, since the expression for $f$ does not contain the coordinate $\phi$ we immediately conclude that the vector field $\partial_\phi$ is a global Killing field on $M$; it will be the only one.

\begin{proposition} \label{prop-Killing-2005-new}
 The dimension of $i(U)$ satisfies
 \begin{equation}
  \dim i(U) \leq 6
 \end{equation}
 and $K_6 = \partial_\phi$ and $K_5 = e^{-\frac{\phi}{2}} \partial_t$ are Killing fields on $U$.
\end{proposition}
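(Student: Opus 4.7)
The plan is to split the proof into two parts: direct verification that $K_5$ and $K_6$ are Killing fields, and an application of Proposition \ref{prop-dimension-Killing} to obtain the upper bound.

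First I would verify that $K_6 = \partial_\phi$ satisfies $\mathcal L_{\partial_\phi} g = 0$. Since $f(x,y,\phi) = \tfrac{a}{2}(x^2 - y^2)$ is independent of $\phi$, every coefficient appearing in the metric \eqref{equ-metric-2005-new} is likewise $\phi$-independent, so all components of $\mathcal L_{\partial_\phi} g$ vanish identically. (This is in fact the global Killing field on $M$ noted in the text.) For $K_5 = e^{-\phi/2}\partial_t$, I would observe that the coordinate change $t = T - \tfrac{a}{2}(x^2-y^2) + e\rho^2$ has $\partial t/\partial T = 1$ and leaves $x,y,\phi$ unchanged, whence $\partial_t = \partial_T$ as tangent vectors on $U$. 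Thus $K_5$ coincides on $U$ with the local Killing field $K = e^{-\phi/2}\partial_T$ of Proposition \ref{killgen}; the hypothesis $f_{xx} + f_{yy} = 0$ is satisfied here since $\tfrac{a}{2}(x^2 - y^2)$ is harmonic, so Proposition \ref{killgen} applies and $K_5 \in i(U)$.

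For the bound $\dim i(U) \leq 6$ I would invoke Proposition \ref{prop-dimension-Killing}. An arbitrary Killing candidate $X \in i(U)$ is determined at a fixed point $p$ by the ten parameters $(X(p), (\nabla X)(p))$, with $(\nabla X)(p)$ antisymmetric. Attaching to these data the components of $(\mathcal L_X D^{(k)} R)_p$ yields a linear map $L_k$ with ten-dimensional domain, and by Proposition \ref{prop-dimension-Killing} we have $\dim i(U) \leq \dim \ker L_k$. Hence it suffices to exhibit a point $p \in U$ and an integer $k$ for which $\operatorname{rank} L_k \geq 4$. I would pick a convenient base point such as $p$ with coordinates $(t,x,y,\phi) = (0,1,1,0)$, compute $R$ and, if necessary, $DR$ from the metric \eqref{equ-metric-2005-new}, expand $\mathcal L_X D^{(k)} R$ as a linear form in the ten parameters, and verify that the resulting coefficient matrix has rank at least four.

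The main obstacle is the symbolic evaluation of $\mathcal L_X D^{(k)} R$ at the chosen point; this is conceptually routine but lengthy and is most naturally performed with computer algebra, exactly as in the final step of the proof of Proposition \ref{killgen}. Once the rank bound is confirmed, Proposition \ref{prop-dimension-Killing} gives $\dim i(U) \leq 6$, and together with the two linearly independent Killing fields $K_5$ and $K_6$ identified above the proof is complete.
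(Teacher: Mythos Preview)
Your proposal is correct and follows essentially the same route as the paper. The paper carries out the rank computation with $k=0$ (i.e.\ using $\mathcal L_X R$ alone, no covariant derivative needed) at the point $p=(1,1,1,0)$ and exhibits the four independent components $(L_X R)_{4241}$, $(L_X R)_{4341}$, $(L_X R)_{4242}$, $(L_X R)_{4342}$; note also that Proposition~\ref{killgen} is stated for arbitrary $f$, so the harmonicity of $\tfrac{a}{2}(x^2-y^2)$ is not actually required to invoke it for $K_5$.
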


\begin{proof}
 We use Prop.~\ref{prop-dimension-Killing} to get the upper bound for $\dim i(U)$. At the point $p=(1,1,1,0)$ the linear system $\mathcal L_X R =0$ consists of the following four linearly independent equations:
\begin{align*}
 (L_X R)_{4241} &= 		&	& \frac{a}{2} X_{1;2}	&	&			&	& & &	\\
 (L_X R)_{4341} &= 		&	&			&	& -\frac{a}{2} X_{1;3}	&	& & &		\\
 (L_X R)_{4242} &= -\frac{a}{2} X^4 &	& -(2e-a)a X_{1;2} 	&	&			&	& + a X_{1;4} & &	\\
 (L_X R)_{4342} &=		&	& \frac{a}{2} X_{1;2}   &	& -\frac{a}{2}(2e-a) X_{1;3} & 	& & & - a X_{2;3}
\end{align*}
Hence, the result follows. 
Since $\partial_t = \partial_T$, the vector field $K_5$ is a Killing field because of Prop. \ref{killgen}.
\end{proof}
The indices of $K_5$ and $K_6$ in the foregoing proposition indicate that indeed $\dim i(U)=6$. In order to find the other four Killing fields we have to write down explicitly the equations given by the Killing equation $\mathcal{L}_Xg=0$. If we set
\begin{equation}
 X \pdef G \partial_t + H \partial_x + P \partial_y + Q \partial_\phi,
\end{equation}
these are:
\begin{align}
 Q_t 								&= 0		\label{k1} \\
 H_t + (2e-a) x Q_t - Q_x 					&= 0		\label{k2} \\
 P_t + (2e+a) y Q_t - Q_y 					&= 0		\label{k3} \\
 -G_t + (2e-a) x H_t + (2e+a) y P_t + (e\rho^2 -t) Q_t - Q_\phi &= 0		\label{k4} \\
 H_x + (2e-a) x Q_x 						&= 0		\label{k5} \\
 P_x + (2e+a) y Q_x + H_y + (2e-a) x Q_x 			&= 0		\label{k6} \\
 -G_x + (2e-a)H + (2e-a)x H_x + (2e+a)y P_x 			& 		\notag	\\
 +(e\rho^2 -t) Q_x + H_\phi + (2e-a)x Q_\phi 			&= 0		\label{k7} \\
 P_y + (2e+a)y Q_y						&= 0		\label{k8} \\
 -G_y + (2e-a)x H_y + (2e+a)P + (2e+a)y P_y			&		\notag	\\
 +(e\rho^2 -t) Q_y + P_\phi + (2e+a)y Q_\phi 			&= 0	 	\label{k9} \\
 -G -2 t Q_\phi + 2e(xH + yP) + 2e\rho^2 Q_\phi 		&		\notag	\\
 +2(2e+a)y P_\phi + 2(2e-a)x H_\phi - 2G_\phi 			&= 0		\label{k10}
\end{align}
The following two assumptions will simplify this system:

\begin{itemize}
 \item $Q = 0$;
 \item $H$ and $P$ are functions of $\phi$ only.
\end{itemize}
Then eqs. (\ref{k1})-(\ref{k3}), \eqref{k5}, \eqref{k6} and eq. (\ref{k8}) are trivially satisfied and we are left with:
\begin{gather}
  G_t = 0					\label{k4'}	\\
 -G_x + (2e-a)H + H_\phi = 0	\label{k7'} 			\\
 -G_y + (2e+a)P + P_\phi = 0	\label{k9'} 			\\
 -G + 2e(xH + yP) + 2(2e+a)y P_\phi + 2(2e-a)x H_\phi -2G_\phi = 0 \label{k10'}
\end{gather}
At this point there are two similar cases.\\

\emph{1.} Suppose $P = 0$. The last three equations simplify to:
\begin{gather}
 G_y = 0	\label{k9''} \\
 -G_x + (2e-a)H + H_\phi = 0	\label{k7''} \\
 -G + 2exH + 2(2e-a)xH_\phi - 2G_\phi = 0	\label{k10''}
\end{gather}
Because of (\ref{k4'}) and (\ref{k9''}), $G$ does not depend on $t$ or $y$. Since $H$ is a function of $\phi$ only, we deduce from (\ref{k7''}) that 
\begin{equation}
 G(x,\phi) = x (H_\phi(\phi) + (2e-a)H(\phi)),
\end{equation}
where we have set a possible integration constant equal to zero. Substituting this into (\ref{k10''}) results in the ODE
\begin{equation} \label{DEh}
 2H_{\phi \phi} + H_\phi - aH = 0.
\end{equation}
The ansatz $H(\phi) = e^{-\lambda \phi}$ leads to the polynomial equation
\begin{equation}
 2 \lambda^2 - \lambda - a = 0
\end{equation}
with zeros
\begin{equation} \label{albe}
 \alpha \pdef \frac{1}{4} \left( 1 + \sqrt{1 + 8a} \right) \quad \text{and} \quad \beta \pdef \frac{1}{4} \left( 1 - \sqrt{1 + 8a} \right),
\end{equation}
and the solutions to \eqref{DEh} are
\begin{equation}
 H_1(\phi) \pdef e^{-\alpha \phi} \quad \text{and} \quad H_2(\phi) \pdef e^{-\beta \phi}.
\end{equation}
For $G$ we calculate
\begin{equation}
 G_1(x,\phi) \pdef (2e -a - \alpha)x e^{-\alpha \phi} \quad \text{and} \quad G_2(x,\phi) \pdef (2e -a - \beta)x e^{-\beta \phi}.
\end{equation}
With these choices for $H$ and $G$ all 10 components of the Killing equation are satisfied. \\

\emph{2.} Suppose $H = 0$. From (\ref{k7'}), (\ref{k9'}) and (\ref{k10'}) we now get:
\begin{gather}
 G_x = 0	\\
 -G_y + (2e+a)P + P_\phi = 0	\\
 -G + 2eyP + 2(2e+a)yP_\phi - 2G_\phi = 0
\end{gather}
Reasoning as in the first case leads to
\begin{equation}
 G(y,\phi) = y(P_\phi(\phi) + (2e+a)P(\phi))
\end{equation}
and an ODE for $P$:
\begin{equation}
 2P_{\phi \phi} + P_\phi + aP = 0
\end{equation}
The solutions for this equation depend on the value of $a$.\\

\emph{A.} For $0<a<\frac{1}{8}$ they take the following form:
\begin{equation}
 P_{11}(\phi) \pdef e^{-\mu \phi} \quad \text{and} \quad P_{12}(\phi) \pdef e^{-\nu \phi},
\end{equation}
where
\begin{equation} \label{munu}
 \mu \pdef \frac{1}{4} \left(1 + \sqrt{1-8a} \right) \quad \text{and} \quad \nu \pdef  \frac{1}{4} \left(1 - \sqrt{1-8a} \right).
\end{equation}
The corresponding expressions for $G$ are
\begin{equation}
 G_{11}(y,\phi) \pdef (2e+a- \mu)y e^{-\mu \phi} \quad \text{and} \quad G_{12}(y,\phi) \pdef (2e+a - \nu)y e^{-\nu \phi}.
\end{equation}
\emph{B.} If $a=\frac{1}{8}$, the solutions are
\begin{align}
 P_{21}(\phi) &\pdef e^{-\frac{\phi}{4}}; & P_{22}(\phi) &\pdef \phi e^{-\frac{\phi}{4}};	\\
 G_{21}(y,\phi) &\pdef (2e - \frac{1}{8})y e^{-\frac{\phi}{4}}; & G_{22}(y,\phi) &\pdef ((2e-\frac{1}{8}) \phi +1)y e^{-\frac{\phi}{4}};
\end{align}
and finally:\\

\emph{C.} For $a>\frac{1}{8}$ we have
\begin{equation}
 P_{31}(\phi) \pdef e^{-\frac{\phi}{4}} \sin \Delta(\phi), \qquad P_{32}(\phi) \pdef e^{-\frac{\phi}{4}} \cos \Delta(\phi)	
\end{equation}
and
\begin{align}
 G_{31}(y,\phi) &\pdef \left( \frac{1}{4} \sqrt{8a-1} \cos \Delta(\phi) + (2e + a - \frac{1}{4}) \sin \Delta(\phi) \right)y e^{-\frac{\phi}{4}}, \\ 
 G_{32}(y,\phi) &\pdef \left( -\frac{1}{4} \sqrt{8a-1} \sin \Delta(\phi) + (2e + a - \frac{1}{4}) \cos \Delta(\phi) \right)y e^{-\frac{\phi}{4}}
\end{align}
with $\Delta(\phi) \pdef \frac{1}{4} \sqrt{8a-1} \, \phi$.\\

We summarize:

\begin{proposition}
 A basis for $i(U)$ is given by the following Killing fields:
 \begin{align}
  K_1 &= e^{-\alpha \phi} [ (2e-a-\alpha)x \partial_t + \partial_x ]	\notag \\
  K_2 &= e^{-\beta \phi} [ (2e-a-\beta)x \partial_t + \partial_x ]	\notag \\
  K_3 &= G_{i1} \partial_t + P_{i1} \partial_y				\notag \\
  K_4 &= G_{i2} \partial_t + P_{i2} \partial_y				\notag \\	
  K_5 &= e^{-\frac{\phi}{2}} \partial_t					\notag \\
  K_6 &= \partial_\phi							\notag \\
\intertext{Here, the $G_{ik}$ and $P_{ik}$ are the functions defined above and}
i &= 
 \begin{cases}
  1 & \quad \text{if} \quad 0<a<\frac{1}{8},	\\
  2 & \quad \text{if} \quad a=\frac{1}{8},		\\
  3 & \quad \text{if} \quad a>\frac{1}{8}.
 \end{cases}\notag 
\end{align}
\end{proposition}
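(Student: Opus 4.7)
The proposition consists of three claims: each $K_j$ is a Killing vector field on $U$, the six fields are $\R$-linearly independent, and they span $i(U)$. The plan is to dispatch these in turn, relying entirely on machinery developed earlier in the section.

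The Killing property has essentially been established already. $K_6 = \partial_\phi$ is Killing because $f(x,y,\phi) = \frac{a}{2}(x^2 - y^2)$ is $\phi$-independent, so $\partial_\phi$ annihilates every component function of the metric \eqref{equ-metric-2005-new}. $K_5 = e^{-\phi/2}\partial_t$ is the Killing field produced by Proposition \ref{killgen}, transported through the identification $\partial_t = \partial_T$ in the new coordinates. The remaining four fields drop out of the two cases analyzed just above: Case 1 yields $K_1, K_2$ via the substitutions $(H_1,G_1)$ and $(H_2,G_2)$, and Case 2 yields $K_3, K_4$ via the case-appropriate choices of $(P_{ij},G_{ij})$ (sub-case A, B, or C according to the value of $a$). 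In each instance all ten Killing equations \eqref{k1}--\eqref{k10} are satisfied by construction, so no further verification is required.

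Linear independence can be checked coordinate by coordinate. Suppose $\sum_{j=1}^{6} c_j K_j = 0$ identically on $U$. The $\partial_\phi$-component immediately gives $c_6 = 0$. The $\partial_x$-component reduces to $c_1 e^{-\alpha\phi} + c_2 e^{-\beta\phi} = 0$ for all $\phi$, which by $\alpha \neq \beta$ (see \eqref{albe}) forces $c_1 = c_2 = 0$. An analogous argument in the $\partial_y$-direction uses the linear independence of the pair $(P_{i1}, P_{i2})$ in each of the three sub-cases to yield $c_3 = c_4 = 0$. The residual $\partial_t$-equation then collapses to $c_5 e^{-\phi/2} = 0$, giving $c_5 = 0$.

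With linear independence in hand, the spanning claim is immediate from the upper bound $\dim i(U) \leq 6$ proved in Proposition \ref{prop-Killing-2005-new}. The only conceptual subtlety, and the one place where something could in principle go wrong, is that Cases 1 and 2 proceeded under the simplifying ansatz $Q = 0$ with $H, P$ depending only on $\phi$; a priori there could be further Killing fields outside this ansatz. The dimension bound supplied by Proposition \ref{prop-dimension-Killing} is precisely what forecloses this possibility, and is the main non-trivial input to the argument.
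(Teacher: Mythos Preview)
Your proof is correct and mirrors the paper's approach: the proposition is stated there as a summary of the preceding derivation, relying on the construction of the six fields via the ansatz in Cases 1 and 2 together with the dimension bound $\dim i(U)\le 6$ from Proposition~\ref{prop-Killing-2005-new}. You have made the linear-independence step explicit where the paper leaves it to the reader, but otherwise the argument is the same.
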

Note that, as in the case of the Killing field $K$ in Prop. \ref{killgen}, none of the Killing fields $K_1$ - $K_5$ can be extended to all of $M$.

\subsubsection{Geodesics of $M$}
As we shall see, with the given special form of $f$, it is possible to solve the geodesic equations on $M$ analytically. We express the geodesic $\gamma \colon I \to M$ defined on some interval $I$ around zero as $\gamma(s) = (t(s),x(s),y(s),\phi(s))$, where $s$ is a proper time parameter in the case of a timelike geodesic $\gamma$ and an affine parameter if $\gamma$ is lightlike or spacelike. Then the geodesic equations read:
\begin{gather}
 \ddot \phi - \frac{1}{2} \dot \phi ^2		=0		\label{gep}					\\
 \ddot x -\frac{1}{2} a x \dot \phi ^2		=0		\label{gex}					\\
 \ddot y + \frac{1}{2} a y \dot \phi ^2	=0		\label{gey}					\\	
 \ddot t + \dot t \dot \phi -(2e-a) \dot x^2 -(2e+a) \dot y^2 -2e(x \dot x + y \dot y) \dot \phi 	\notag  \\
  + \frac{1}{2} \left( t + (a^2 -2ea -e) x^2 + (a^2 + 2ea -e) y^2 \right) \dot \phi^2  =0		\label{get}
\end{gather}
Furthermore, we have the condition
\begin{equation} \label{mc2005}
 -2 \dot t \dot \phi + \dot x^2 + \dot y ^2 +2(2e-a)x \dot x \dot \phi +2(2e+a)y \dot y \dot \phi + (e(x^2+y^2) - t) \dot \phi ^2 = k,
\end{equation}
where $k=-1,0,1$ for timelike, lightlike or spacelike geodesics, respectively.


The solving process consists of two major steps. In the first step we solve eqs. (\ref{gep})-(\ref{gey}), which are independent of $t$, and in the second step we use \eqref{mc2005} to determine an expression for $t(s)$.\\

\emph{Step 1.} Use the Killing field $K_6$ or directly integrate \eqref{gep} to get
\begin{equation} \label{gep'}
 e^{-\frac{\phi}{2}} \dot \phi = A
\end{equation}
for an arbitrary constant $A$. Note that \eqref{gep'} is equivalent to \eqref{gep}. The general solution for \eqref{gep'} in the case of $A \neq 0$ is
\begin{equation} \label{solp}
 \phi(s) = - 2 \ln \left( - \frac{A}{2} (s + B) \right),
\end{equation}
where $B$ is another constant. The case $A =0$ will be treated seperately later on.
Since we would like $\gamma$ to be defined on some interval around zero, we assume $B \neq 0$. Because $A$ has the same sign as $\dot \phi(0)$, we see that $\phi$ is only defined on
\begin{enumerate}
 \item $(-\infty,-B)$ if $\dot \phi (0) >0$, or
 \item $(-B,\infty)$ if $\dot \phi (0) <0$.
\end{enumerate}
In either case $\gamma$ is incomplete. By traversing $\gamma$ in the opposite direction, i.e., by setting $\tilde \gamma (s) \pdef \gamma(-s)$, it is always possible to obtain the case $\dot \phi (0) <0$. Then $s+B>0$ on $I$ and since this will simplify the calculations to come, we make the assumption
\begin{equation}
 \dot \phi (0) <0 \qquad \Leftrightarrow \qquad B>0 \qquad \Leftrightarrow \qquad A<0.
\end{equation}
If we substitute the result \eqref{solp} into \eqref{gex}, we arrive at
\begin{equation} \label{gex'}
 \ddot x - \frac{2a}{(s+B)^2} x = 0.
\end{equation}
The solution of this equation is
\begin{equation} \label{solx}
 x(s) = C (s+B)^{2 \alpha} + D (s+B)^{2\beta}
\end{equation}
with $\alpha$ and $\beta$ defined in \eqref{albe} and two further constants $C$ and $D$. 
Similarly, \eqref{gey} becomes
\begin{equation} \label{gey'}
 \ddot y + \frac{2a}{(s+B)^2} y = 0,
\end{equation}
but the solution to this equation depends on the value of $a$. With constants $E$, $F$ and the function
\begin{equation}
 \kappa(s) \pdef \frac{1}{2} \sqrt{8a-1} \ln(s+B)
\end{equation}
we have (for the constants $\mu$ and $\nu$ confer \eqref{munu}):
\begin{align}
 0<a<\frac{1}{8} & \colon & \qquad y(s) = & E (s+B)^{2\mu} + F (s+B) ^{2\nu}	\label{soly1} \\
 a = \frac{1}{8} & \colon & 	y(s) = & \sqrt{s+B} (E + F \ln (s+B))		\label{soly2} \\
 a>\frac{1}{8}   & \colon &	y(s) = & \sqrt{s+B} \left[ E \sin(\kappa(s)) + F \cos (\kappa(s)) \right] \label{soly3}
\end{align}
\emph{Step 2.} To find an expression for $t(s)$ is more complicated. First, we rewrite the metric condition \eqref{mc2005} as
\begin{equation} \label{mc2005'}
 k + 2 \dot t \dot \phi + t \dot \phi ^2 = h,
\end{equation}
where
\begin{equation} \label{funch}
 h \pdef \dot x^2 + \dot y ^2 +2(2e-a)x \dot x \dot \phi +2(2e+a)y \dot y \dot \phi +(e(x^2+y^2) - t) \dot \phi ^2.
\end{equation}
This equation is easily integrated. One obtains
\begin{equation} \label{soltimp}
 t(s) = \left[ \int \frac{1}{4} \left(k - h(s) \right) ds + G \right] (s+B),
\end{equation}
$G$ being another constant.
For each of the three possibilities for $a$ we calculate $h(s)$ and perform the integration in \eqref{soltimp}. Since the explicit steps of the calculation are rather long but straightforward we suppress them and simply state the results.\\

\emph{A.} In the case of $0<a<\frac{1}{8}$ we have
\begin{equation} \label{solt1}
 t(s) = \bar t(s) + u_3 (s+B)^{4 \mu} + u_4 (s+B)^{4 \nu} 
\end{equation}
with 
\begin{equation}
 \bar t(s) \pdef \frac{k}{4} s (s+B) + G(s+B) + u_1 (s+B)^{4\alpha} + u_2 (s+B)^{4\beta}
\end{equation}
and the abbreviations
\begin{equation}\begin{aligned}
 u_1 &\pdef \frac{C^2}{2}(2e-a-\alpha),	&	u_2 &\pdef \frac{D^2}{2}(2e-a-\beta),	\\
 u_3 &\pdef \frac{E^2}{2}(2e+a-\mu),		&	u_4 &\pdef \frac{F^2}{2}(2e+a-\nu).
\end{aligned} \end{equation}
\emph{B.} Here $a= \frac{1}{8}$ and
\begin{equation} \label{solt2}
 t(s) = \bar t(s) + (v_1 + v_2 \ln(s+B)) (s+B) \ln(s+B),
\end{equation}
where
\begin{equation}
 v_1 \pdef \frac{1}{4} F (E(8e - \frac{1}{2}) - F ) \quad \text{and} \quad v_2 \pdef \frac{1}{8} F^2 ( 8e - \frac{1}{2}).
\end{equation}
Finally, there is:\\

\emph{C.} For $a> \frac{1}{8}$ the solution takes the form
\begin{equation} \label{solt3}
 t(s) = \bar t(s) + \left[w_2 \cos^2 (\kappa(s)) - w_1 \cos \kappa(s) \sin \kappa(s) \right] (s+B),
\end{equation}
where the constants $w_1$ and $w_2$ are
\begin{equation}\begin{aligned}
 w_1 &\pdef \frac{1}{4} \left[ \frac{1}{2} \sqrt{8a-1}(E^2-F^2) - (4(2e+a) -1)EF \right],		\\
 w_2 &\pdef \frac{1}{4} \left[ \left( \frac{1}{2} - 2(2e+a) \right) (E^2-F^2) - \sqrt{8a-1} EF \right].
\end{aligned} \end{equation}
In order to complete our programme we need
\begin{proposition}
 Any curve $\gamma$ satisfying the first three geodesic equations \eqref{gep} - \eqref{gey} and the metric condition \eqref{mc2005} is a geodesic.
\end{proposition}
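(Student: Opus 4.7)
The strategy is the well-known observation that the metric condition $g(\dot\gamma,\dot\gamma) = k$ already encodes one of the four geodesic equations. Denote by
\begin{equation*}
E^j \pdef \ddot{x}^j + \Gamma^j_{kl}\dot{x}^k\dot{x}^l, \qquad j \in \{t,x,y,\phi\},
\end{equation*}
the components of the geodesic vector field, so the full geodesic system is $E^t = E^x = E^y = E^\phi = 0$. Differentiating the metric condition \eqref{mc2005} along $\gamma$ and using the compatibility of the Levi-Civita connection with $g$ yields the well-known identity
\begin{equation*}
g_{ij}\dot{x}^i\, E^j \;=\; \tfrac{1}{2}\tfrac{d}{ds}\bigl(g_{ij}\dot{x}^i\dot{x}^j\bigr) \;=\; 0,
\end{equation*}
valid for any curve satisfying \eqref{mc2005}.

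The plan is then to exploit this identity in the presence of the three given geodesic equations. Since \eqref{gep}--\eqref{gey} say that $E^\phi = E^x = E^y = 0$, the identity collapses to $g_{tj}\dot{x}^j \cdot E^t = 0$. From \eqref{equ-metric-2005-new} one reads off $g_{tt} = g_{tx} = g_{ty} = 0$ and $g_{t\phi} = -1$ (the normalization fixed by $\det[g_{ij}] = -1$ as noted after the metric is introduced), so the identity becomes
\begin{equation*}
\dot{\phi}(s)\cdot E^t(s) \;=\; 0 \qquad \text{for all } s \in I.
\end{equation*}
Hence $E^t$ vanishes on the open set where $\dot\phi \neq 0$. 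Equation \eqref{gep} is equivalent to $\bigl(e^{-\phi/2}\dot\phi\bigr)^{\!\cdot} = 0$, so $e^{-\phi/2}\dot\phi = A$ is constant; whenever $A \neq 0$, $\dot\phi$ is nowhere zero, and therefore $E^t \equiv 0$ on $I$. Combined with the hypothesis, $\gamma$ then satisfies all four geodesic equations.

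The only real subtlety is the degenerate branch $A = 0$, i.e. $\dot\phi \equiv 0$. In that case the metric condition \eqref{mc2005} reduces to $\dot{x}^2 + \dot{y}^2 = k$ and carries no information about $t$, so the preceding argument gives no constraint on $E^t$. This is exactly the exceptional case already flagged in the text (``The case $A = 0$ will be treated separately later on.''), and one handles it by directly checking \eqref{get} for the explicit curves arising when $\dot\phi \equiv 0$: the remaining equations force $x$, $y$ to be affine and so reduce \eqref{get} to an elementary second-order ODE for $t$. This edge case is the main, and only, obstacle to a purely formal deduction, and it is isolated by the dichotomy $A \neq 0$ versus $A = 0$ built into \eqref{gep}.
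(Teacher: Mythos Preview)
Your proof is correct and follows essentially the same route as the paper: differentiate the metric condition to obtain $\langle \gamma'',\gamma'\rangle=0$, use the three assumed geodesic equations to reduce this to $\dot\phi\,E^t=0$ via the fact that $g_{t\phi}$ is the only nonzero $g_{t\,\cdot}$ component, and then invoke the dichotomy (from \eqref{gep}) that $\dot\phi$ is either identically zero or nowhere zero, deferring the $A=0$ branch to the explicit computation that follows. The paper's argument is identical in structure, only phrased with $\gamma''=a^{(t)}\partial_t+\cdots$ in place of your $E^j$ notation.
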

\begin{proof}
 We have to check that the geodesic equation for $t$ is satisfied by $\gamma$. If we set
\begin{equation}
 \gamma'' = a^{(t)} \partial_t + a^{(x)} \partial_x + a^{(y)} \partial_y + a^{(\phi)} \partial_\phi,
\end{equation}
then equations \eqref{gep}-\eqref{gey} give $a^{(x)} = a^{(y)} = a^{(\phi)} = 0$. Derivation of the metric condition $\langle \gamma', \gamma' \rangle = k$ w.r.t. $s$ results in
\begin{equation}
 \langle \gamma'', \gamma' \rangle = 0 \qquad \Leftrightarrow \qquad \dot \phi \, a^{(t)} = 0.
\end{equation}
As we know from \eqref{gep'}, the function $\dot \phi$ is either always zero or never zero. Thus, if $\dot \phi$ is never zero, then $a^{(t)} = 0$ and $\gamma$ is a geodesic. In a moment we will see that the case $\dot \phi = 0$ will lead to a geodesic as well.
\end{proof}

\emph{The case $A=0$.} This condition is equivalent to $\dot \phi = 0$ and therefore we immediately conclude
\begin{equation}
 x(s) = c_1 s + c_0 \qquad \text{and} \qquad y(s) = d_1 s + d_0.
\end{equation}
The metric condition now reads
\begin{equation}
 c_1^2 + d_1^2 =k.
\end{equation}
Hence for a causal geodesic, we have $k=c_1=d_1=0$. Then the geodesic equation for $t$ yields $\ddot t =0$, thus $\gamma$ is of the form
\begin{equation}
 \gamma(s) = (l_1 s + l_0,x_0,y_0,\phi_0),
\end{equation}
for constants $l_0$ and $l_1$, i.e., linear parametrizations of $t$-coordinate lines are null geodesics.
If $k=1$, we find that 
\begin{equation}
 \gamma(s) = (l_2 s^2 + l_1 s + l_0 , c_1 s + c_0, d_1 s + d_0, \phi_0),
\end{equation}
where
\begin{equation}
 l_2 \pdef (2e-a) c_1^2 + (2e+a) d_1 ^2 \qquad \text{and} \qquad c_1^2 + d_1^2 =1,
\end{equation}
are spacelike geodesics. 

\subsubsection{Discussion}
To simplify later expressions, we introduce the notation
\begin{equation}
 \isgn(A) \pdef \begin{cases}
                 + \infty, \qquad 	& \text{if } A> 0,	\\
		 0			& \text{if } A=0,	\\
		 - \infty		& \text{if } A <0,
                 \end{cases}
\end{equation}
for any real number $A$. 
As we see from the solutions, the constant $B$ determines the domain of $\gamma$. Since we assume $\dot \phi (0) <0$, we always have $\gamma \colon (-B, \infty) \to M$. The causal character of $\gamma$ does not have any influence on the behavior of the three component functions $\phi$, $x$ and $y$. Hence the following holds for any geodesic.\\

\emph{Behavior of $\phi$.}\\
From the result \eqref{solp} obtained above we determine
\begin{equation}
 \dot \phi(s) = - \frac{2}{s+B}, \qquad B = - \frac{2}{\dot \phi(0)}, \qquad A = \dot \phi(0) e^{-\frac{\phi(0)}{2}}.
\end{equation}
Hence on $I=(-B, \infty)$ the function $\phi$ is strictly decreasing and its asymptotic behavior is (cf. Fig. \ref{fig-phi-x})
\begin{equation}
 \lim_{s \searrow -B} \phi(s) = + \infty \qquad \text{and} \qquad \lim_{s \to \infty} \phi(s) = - \infty.
\end{equation}
However, we must not forget that $\phi$ is a circular coordinate, i.e., geometrically important is the behavior of $s \mapsto \varrho(\phi(s)) \pdef (\cos \phi(s), \sin \phi(s) ) \in \mathbb{S}^1$. In particular, this means that as $s$ approaches $-B$ from above, $\gamma$ circles infinitely often around $\mathbb{S}^1$ in $M=\R^3 \times \mathbb{S}^1$.\\

\emph{Behavior of $x$.}\\
Since $\alpha >0$ and $\beta <0$, we conclude from \eqref{solx} that
\begin{equation}
 \lim_{s \searrow -B} x(s) = \isgn (D)  \qquad \text{and} \qquad \lim_{s \to \infty} x(s) = \isgn(C).
\end{equation}
The initial values $x(0)$ and $\dot x(0)$ determine the parameters $C$ and $D$ by
\begin{equation} \label{invaluesx}
\begin{aligned}
 C &= \frac{1}{\sqrt{1+8a} B^{2\alpha - 1}} \left( \dot x(0) - \frac{2 \beta}{B} x(0) \right), \\
 D &= -\frac{1}{\sqrt{1+8a} B^{2\beta - 1}} \left( \dot x(0) - \frac{2 \alpha}{B} x(0) \right). 
\end{aligned}
\end{equation}

\begin{figure}[here] \centering
 \includegraphics[scale=0.5]{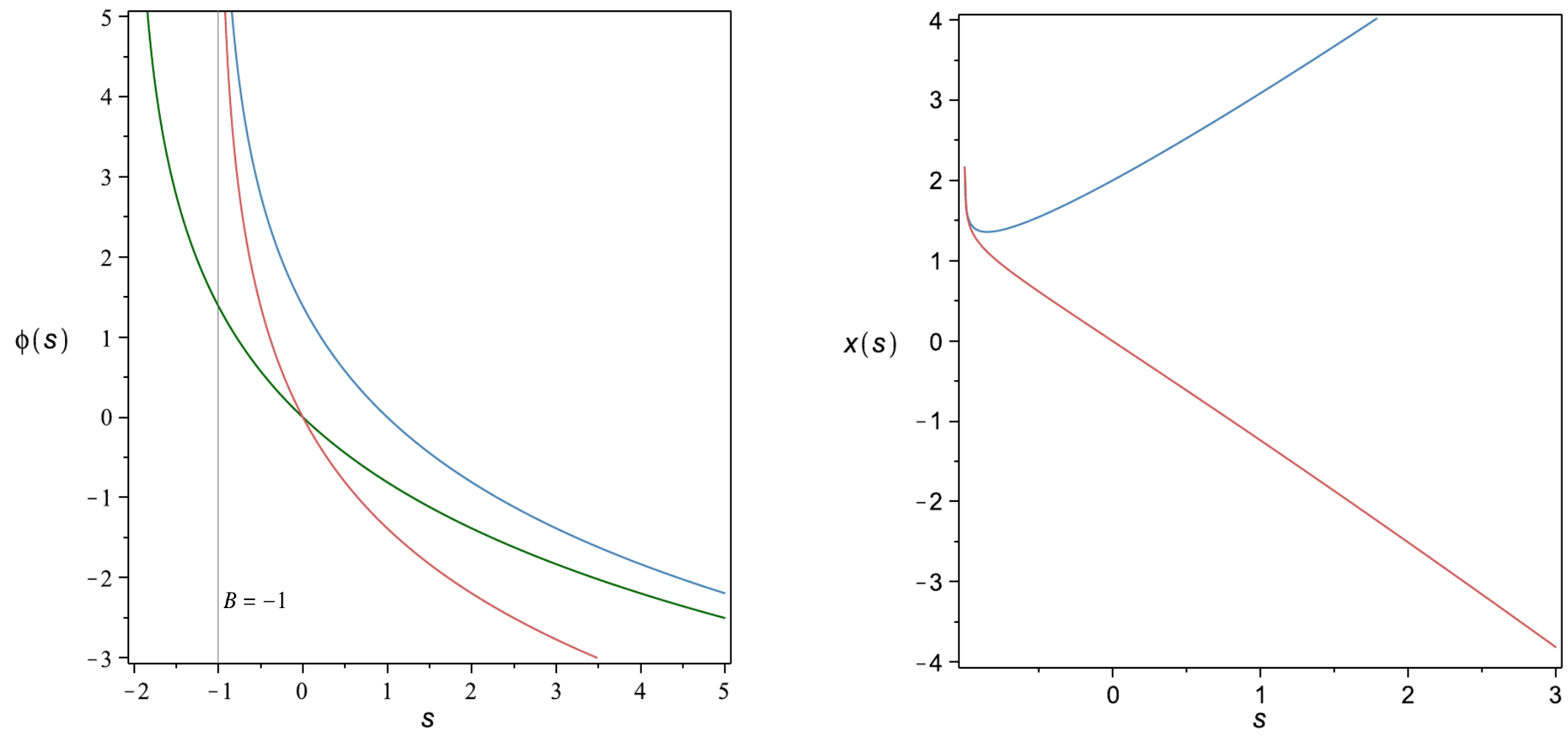}
 \caption{Some examples of the geodesic component functions $\phi$ and $x$. On the left the parameter values are $A=-1$, $B=1$ (blue), $A=-2$, $B=1$ (red) and $A=-1$, $B=2$ (green). For $B=-1$ the asymptote is shown. With $a=\frac{1}{16}$ (right), the function $x$ is displayed for $C=D=1$ (blue) and $C=-1$, $D=1$ (red).}
 \label{fig-phi-x}
\end{figure}

\emph{Behavior of $y$.}\\
Here again, we have to distinguish the already well known cases for $a$.\\
\emph{A.} Because of $0 < 4 \nu < 1< 4 \mu$, \eqref{soly1} implies
\begin{equation}
 \lim_{s \searrow -B} y(s) = 0 \qquad \text{and} \qquad \lim_{s \to \infty} y(s) = 
	\begin{cases}
	 \isgn(E) \qquad & \text{if} \quad E \neq 0, \\
	 \isgn(F) 	& \text{if} \quad E=0.
	\end{cases}
\end{equation}
The parameters $E$ and $F$ are related to the initial values $y(0)$ and $\dot y(0)$ by formulas completely analogous to \eqref{invaluesx}.\\

\emph{B.} It follows from \eqref{soly2} that
\begin{equation}
 \lim_{s \searrow -B} y(s) = 0 \qquad \text{and} \qquad \lim_{s \to \infty} y(s) = 
	\begin{cases}
	 \isgn(F) \qquad & \text{if} \quad F \neq 0, \\
	 \isgn(E) 	& \text{if} \quad F=0.
	\end{cases}
\end{equation}
The relevant parameter $F$ can be obtained by 
\begin{equation}
 F = \sqrt{B} \left( \dot y(0) - \frac{y(0)}{2B} \right).
\end{equation}

\begin{figure}[here] \centering
 \includegraphics[scale=0.5]{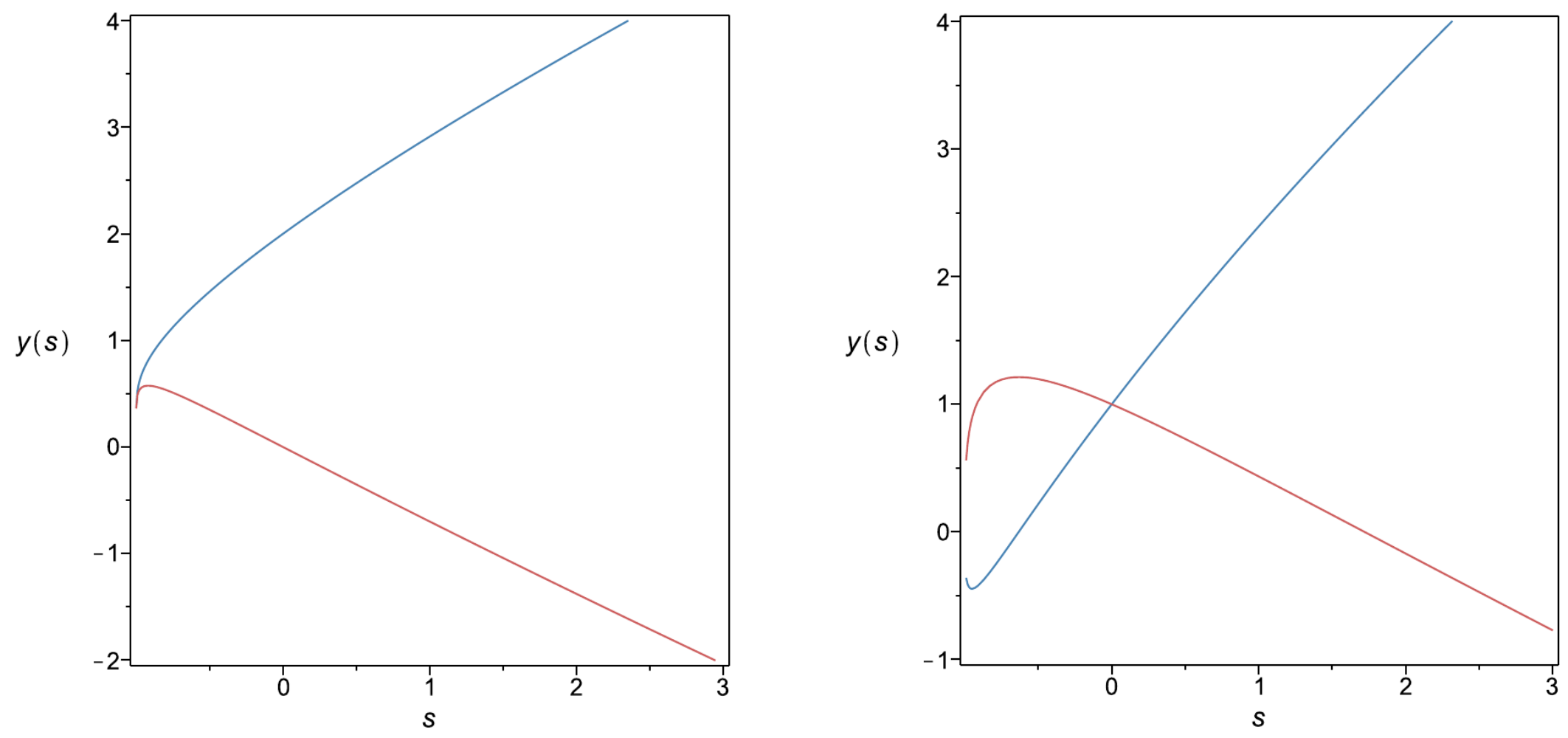}
 \caption{The geodesic component function $y$. For $a=\frac{1}{16}$ (left) we chose $E=F=1$ (blue) and $E=-1$, $F=1$ (red). On the right $a=\frac{1}{8}$ and the parameters are $E=F=1$ (blue) as well as $E=1$, $F=-1$ (red).}
\end{figure}

\emph{C.} In this case we refer to \eqref{soly3}. Still
\begin{equation}
 \lim_{s \searrow -B} y(s) = 0,
\end{equation}
but for $s \to \infty$, the function $y(s)$ does not converge, in general. Instead it oscillates consecutively between the values $F$, $E$, $-F$ and $-E$ multiplied with $\sqrt{s+B}$.\\

\begin{figure}[here] \centering
 \includegraphics[scale=0.3]{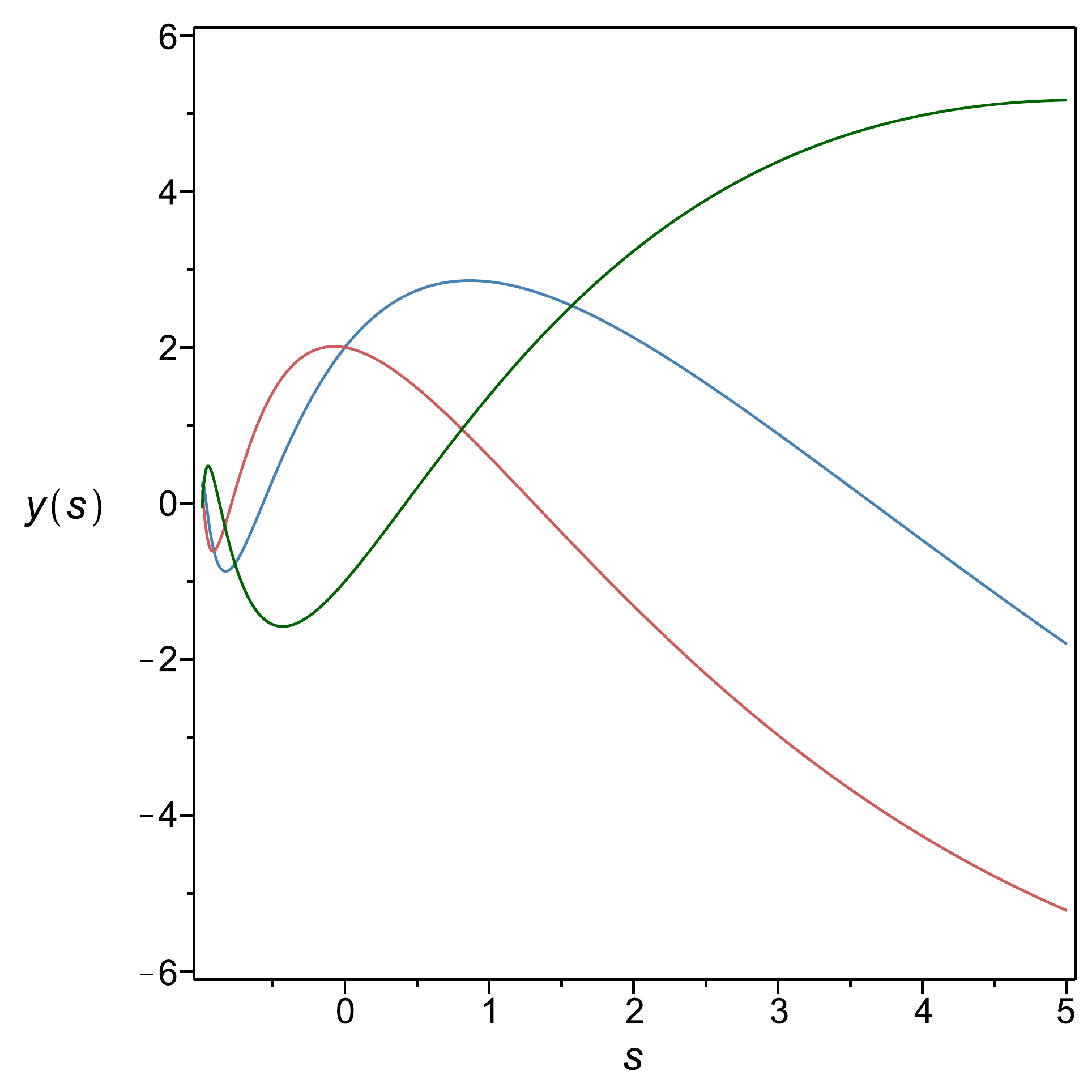}
 \caption{When $a=1$ we get oscillating behavior for $y(s)$. $E=1$, $F=2$ is in blue, $E=-1$, $F=2$ in red and $E=2$, $F=-1$ green.}
\end{figure}

\emph{Behavior of $t$.}\\
It is the form of $t$ that shows the causal character of geodesics. Since the expression for $t$ is different in each of the three cases induced by the parameter $a$ and does generally include all parameters from $B$ to $F$, there are too many possibilities of asymptotic behavior to plot examples for all of them. Therefore, we are going to illustrate the behavior of $t$ by some examples focussing on causal geodesics and mention some properties as we go along.\\

\emph{A.} In this case the expression for $t$ is given by \eqref{solt1}. The inequalities
\begin{equation}
 4\beta < 0 < 4\nu < 1 < 4\mu < 2 < 4\alpha
\end{equation}
imply the limit
\begin{equation}
 \lim_{s \searrow -B}t(s) =  \isgn(u_2) \, \infty.
\end{equation}
Using the same inequalities we also get
\begin{equation}
 \lim_{s \to \infty} t(s) = 
  \begin{cases}
  \isgn(u_1)  \quad 	& \text{if} \quad u_1 \neq 0,			 	\\
  \isgn(k) 		& \text{if} \quad u_1 = 0 \text{, } k \neq 0,		\\
  \isgn(u_3) 		& \text{if} \quad u_1=k=0 \text{, } u_3 \neq 0,	\\
  \isgn(G) 		& \text{if} \quad u_1=u_3=k=0 \text{, } G \neq 0,	\\
  \isgn(u_4) \		& \text{if} \quad u_1=u_3=k=G=0.			\\
 \end{cases}
\end{equation}

\emph{B.} Analyzing \eqref{solt2}, we come to the conclusion that the asymptotic behavior of $t$ for $s \searrow -B$ is the same as in case \emph{A}. Furthermore,
\begin{equation}
 \lim_{s \to \infty} t(s) = 
  \begin{cases}
  \isgn(u_1)  \quad 	& \text{if} \quad u_1 \neq 0,			 	\\
  \isgn(k) 		& \text{if} \quad u_1 = 0 \text{, } k \neq 0,		\\
  \isgn(v_2) 		& \text{if} \quad u_1=k=0 \text{, } v_2 \neq 0,		\\
  \isgn(v_1) 		& \text{if} \quad u_1=k=v_2=0 \text{, } v_1 \neq 0,	\\
  \isgn(G) 		& \text{if} \quad u_1=v_2=v_1=k=0.			\\
 \end{cases}
\end{equation}

In the following figure, where we illustrate cases \emph{A} and \emph{B}, we have $u_1 > 0$ and $u_2>0$.\\

\begin{figure}[here] \centering
 \includegraphics[scale=0.5]{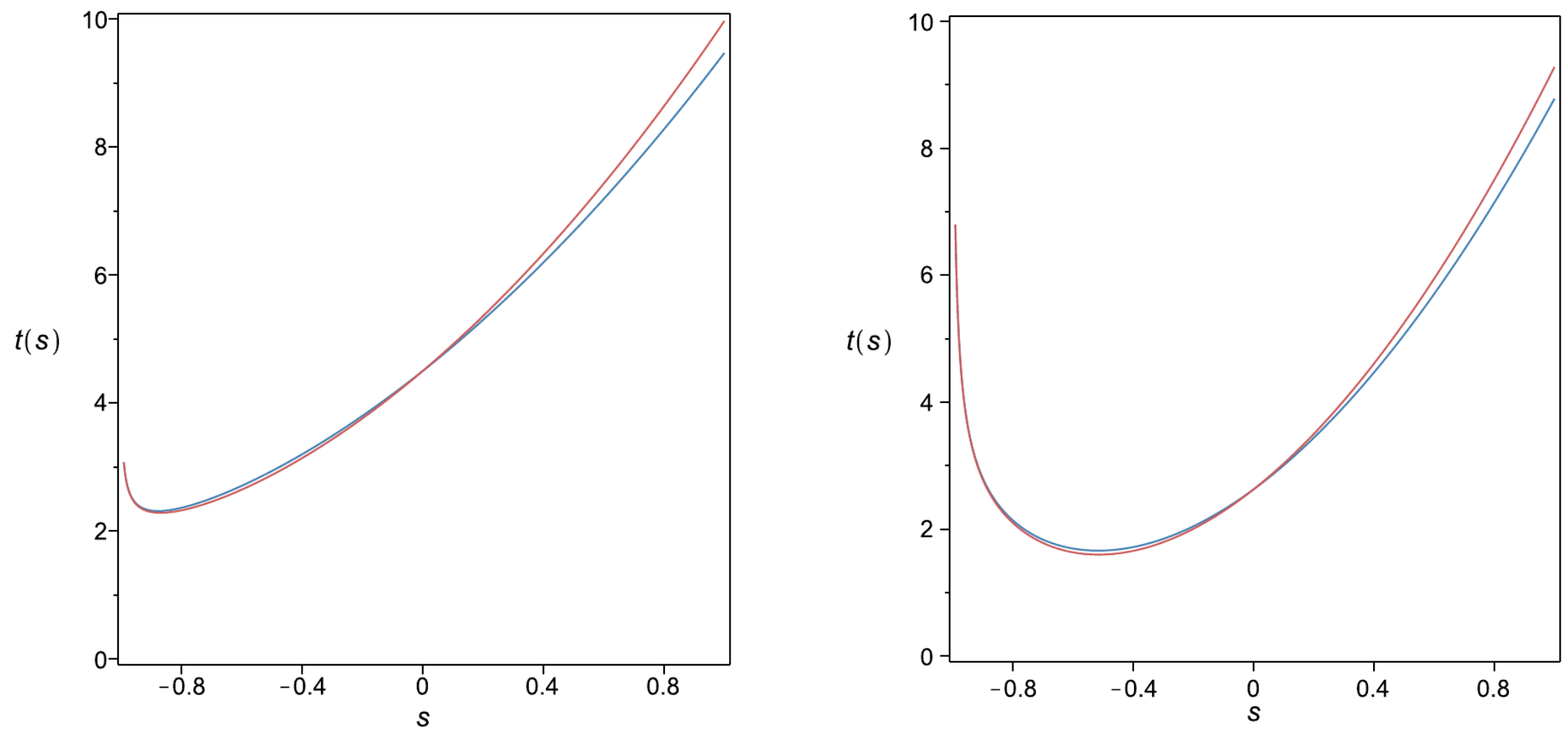}
 \caption{The function $t$ for $a=\frac{1}{16}$ (left) and $a=\frac{1}{8}$ (right). In either case the red curve is a lightlike geodesic, whereas the blue one is timelike. All parameters, except $a$, are equal to $1$ in both plots.}
\end{figure}

\emph{C.} Here we have to deal with \eqref{solt3}. The behavior of $t$ as $s$ tends to $-B$ from above is the same as in the foregoing two cases. This is also true for $\lim_{s \to \infty} t(s)$ except if $u_1=k=0$. In that case, $t$ does in general not converge as $s$ tends to infinity.

\begin{figure}[here] \centering 
 \includegraphics[scale=0.3]{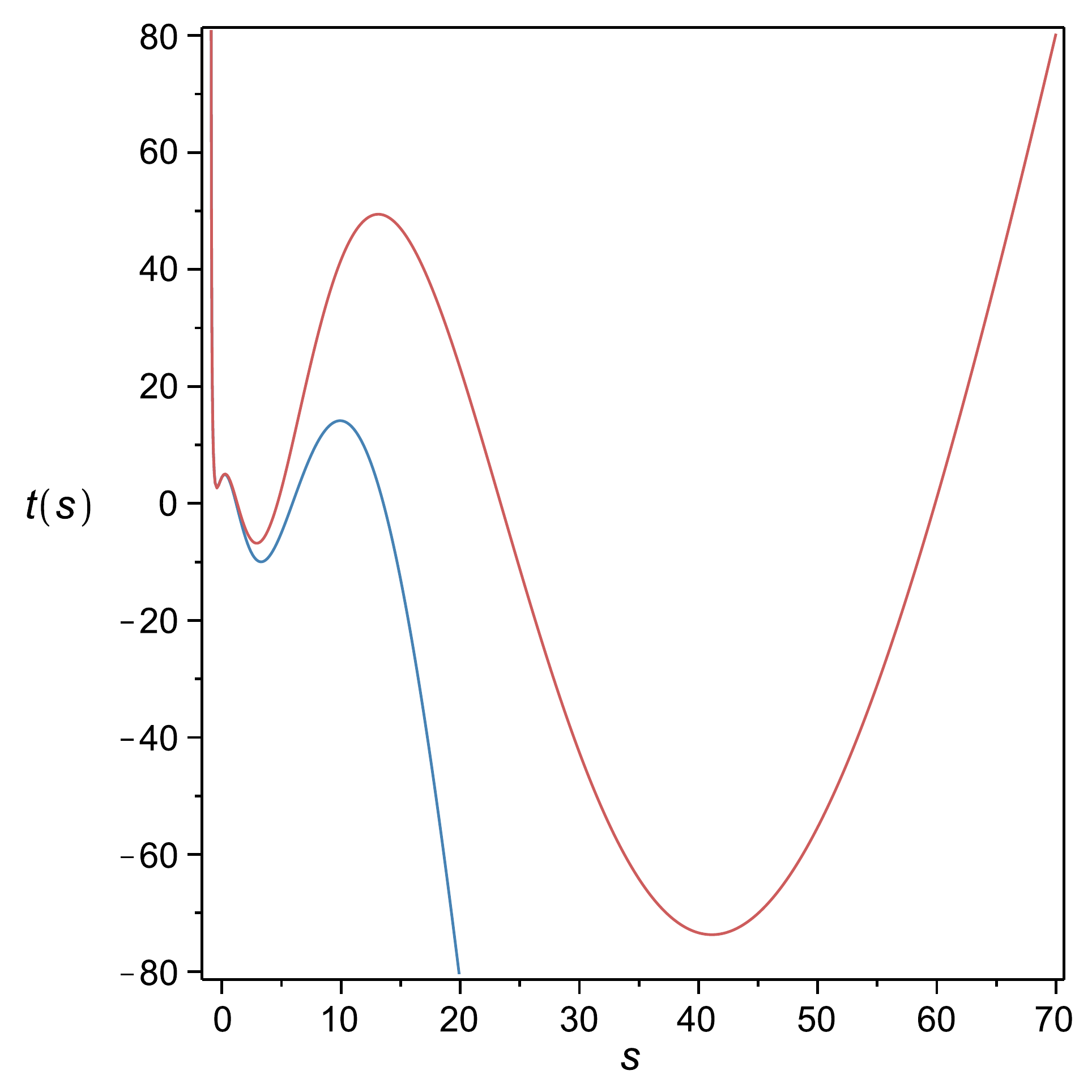}
 \caption{The function $t$ for a lightlike geodesic (red) and a timelike geodesic (blue) in the case of $a>\frac{1}{8}$. Here, all parameters, including $a$, are equal to $1$.} \label{figt3}
\end{figure}

Thus in Fig. \ref{figt3}, the curve in red is the $t$-component of a lightlike geodesic with $u_1 =0$. For the blue $t$-component of the timelike geodesic shown we have $u_1 = 0$ as well, but since here $k=-1$, there is no oscillation and $\lim_{s \to \infty} t(s) = - \, \infty$.

\section{Ori's Spacetime of 1993} \label{chap-Ori93}
Our exposition here is mainly based on \cite{Ori93}, but we also refer to \cite{Ori94} and \cite{Ori96} for further details about the physical relevance of this spacetime.

\subsection{The Manifold}
Topologically the manifold has the form $M:=\R^4$.
In order to specify the metric we start with standard coordinates $(t,x,y,z)$ on $M$ and introduce polar coordiantes on the planes of constant $t$ and $z$ according to the usual formulas 
\begin{align}
\left\{ \begin{aligned}
   x &= r \cos \phi \\  y &= r \sin \phi
\end{aligned} \right.
 \qquad \text{with} \quad r \in (0,\infty), \quad \phi \in (-\pi, \pi).
\end{align}
Interpreting $\phi$ as a circular coordinate, we obtain the alternative coordinate system $(t,r,\phi,z)$ on $M \setminus S$, where $S:=\{ p \in M \colon x=y=0 \}$. Away from $S$ we will use these coordinates from now on unless otherwise mentioned.
The metric then reads
\begin{equation} [g_{ij}]:=
 \begin{bmatrix}
  -1 & 0 & ahrt & 0\\
  0 & 1 & -bhr(r-r_0) & 0\\
  ahrt & -bhr(r-r_0) & r^2 \left( 1+h^2 \left( b^2\rho ^2 - a^2 t^2 \right) \right) & -bhrz 
  0 & 0 & -bhrz & 1
 \end{bmatrix}.
\end{equation}
Here $a,b,r_0 >0$ are constants, $\rho(r,z)^2:=(r-r_0)^2 + z^2$ and $h: \R \to \R$ is a function of class at least $C^2$ depending on $\rho$ satisifying the following conditions: 
\begin{enumerate}
\item $\quad 0 \leq h(\rho) \leq 1$ \quad for all $\rho$,
\item $\quad h(\rho)=1$ \quad for $\rho \leq 0$,
\item $\quad h(\rho) =0$ \quad for $\rho \geq d$,
\item $\quad h'(\rho) < 0$ \quad for $0<\rho<d$.
\end{enumerate}
For the fourth constant $d$ we stipulate $0<d<r_0$. 
The sets $\{ p \in M \colon t(p)=t_0 , \rho(p) = \rho_0 < r_0 \}$ are tori (in particular this is the case for $\rho = d$ ) and when talking about such tori as being of the form $\rho = const$ we tacitly assume $t=const$.
\begin{figure}[here] \centering
 \includegraphics[scale=0.8]{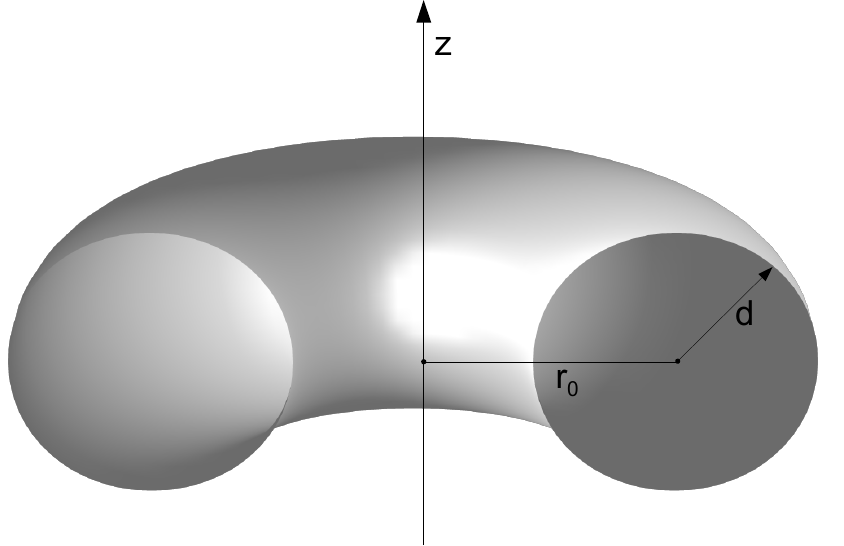}
 \caption{Illustration of the torus $T$. Outside $T$ the metric is equal to the Minkowski metric.}
\label{torus}
\end{figure}
The function $h$ confers a special role to the parameter $d$ because by means of $h$ the metric $g$ reduces to the Minkowski metric 
\begin{equation}
  -dt \otimes dt + dr \otimes dr +r^2 d\phi \otimes d\phi +dz \otimes dz
\end{equation}
outside the torus $T:= \{p \in M \colon \rho(p) =d \}$, which at the same time shows that $g$ is well defined on $M$ since the gap $S$ is filled in by setting $g$ to be the Minkowski metric on $S$, too. The component functions of $g$ are at least $C^2$ everywhere, and we calculate $\det [g_{ij}] = -r^2$. Hence, $g$ is non degenerate. One can easily verify that the vector fields
\begin{equation*}
 E_1 \pdef \partial_t, \quad E_2 \pdef \partial_r, \quad E_3 \pdef aht \partial_t + bh(r-r_0) \partial_r + \frac{1}{r} \partial_\phi + bhz \partial_z, \quad E_4 \pdef \partial_z
\end{equation*}
define an orthonormal frame on the range of the coordinates $(t,r,\phi,z)$, and the relations $\langle E_1,E_1 \rangle = -1$ and $\langle E_k,E_k \rangle = 1$ for $k \neq 1$ show that $g$ has correct Lorentzian signature (throughout $M$).  
Everything presented here is independent of the explicit form of $h$, provided the conditions (i)-(iv) listed above are satisfied. However, there are circumstances when more information about $h$ is needed. Calculating numerical values of certain quantities or the explicit form of geodesics are such examples. 
Algebraically easy to handle is Ori's choice for $h$ which is
\begin{align} h(\rho)=
 \begin{cases}
  1 \qquad 	\qquad	\qquad					& \rho \leq 0, \\
  \left( 1 - \left( \frac{\rho}{d} \right) ^4 \right) ^3	& 0 < \rho <d, \\
  0								& \rho \geq d.
 \end{cases}
\end{align}
In this case one encounters the drawback of h being continuously differentiable up to order 2 only. A different choice which results in a smooth version would be
\begin{align} h(\rho) = 
 \begin{cases}
  1      \qquad 					& \rho \leq 0, \\
  \frac{ \eta(d-\rho) }{ \eta(d-\rho) + \eta(\rho) }	& 0<\rho<d, \\
  0							& \rho \geq d,
 \end{cases}
\qquad \text{where} \quad \eta(t) \pdef
\begin{cases}
 \exp\left(- \frac{1}{t} \right)	 \quad 	& t > 0, \\
 0						& t \leq 0.
\end{cases}
\end{align}
Here $h$ is a typical cutoff function as used in differential geometry.

\subsection{CTCs in $M$}
For any given real number $t_0$ let $H_{t_0}$ denote the hypersurface $t^{-1}(t_0)$ in $M$.
We are interested in characteristics of the set
\begin{equation*}
 M_{CV}(t) := \{ p \in H_t \colon \exists \, \gamma \colon I \to H_t, \, p \in \gamma(I), \, \gamma \text{ closed and timelike} \}, 
\end{equation*}
where the subscript is meant to stand for chronology violation.  
First consider the curves $ \gamma \colon [-\pi,\pi] \to H_t, \, s \mapsto (t,r,s, z)$ for constant values of $t,r$ and $z$ satisfying $\rho(r,z) < d$ (which implies $h(\rho) \neq 0$, a fact we will need later). We have 
\begin{equation} \label{tanvec}
 \langle \gamma' , \gamma' \rangle = \langle \partial_\phi , \partial_\phi \rangle = r^2 ( 1 + h^2 ( b^2\rho^2 - a^2t^2 ) )
\end{equation}
which is independent of the curve parameter $s$. To check for the sign of this expression we need to look at 
\begin{equation} \label{rhot}
 1 + h^2 (b^2 \rho^2 - a^2 t^2 ) = 0 \quad \Leftrightarrow \quad a^2 t^2 = b^2 \rho^2 + h^{-2}.
\end{equation}
The function $ [0,d) \to [1,\infty), \, \rho \mapsto b^2 \rho^2 + h^{-2} $ is strictly increasing (in view of the properties of $h$) and surjective (hence bijective). Therefore, \eqref{rhot} has a unique solution $\rho_t$ iff $|t| \geq \frac{1}{a}$. The function $[\frac{1}{a}, \infty) \to [0,d), \, t \mapsto \rho_t$ is likewise strictly increasing. Given these results we deduce from \eqref{tanvec} and \eqref{rhot} that for $at \geq 1$ (now we index $\gamma$ by two ``parameters" $r,z$)
\begin{align} \gamma_{r,z} \quad \text{is} \quad
 \begin{cases}
  \text{timelike} \qquad		& \text{if} \quad \rho(r,z) < \rho_t, \\
  \text{null}				& \text{if} \quad \rho(r,z) = \rho_t, \\
  \text{spacelike}			& \text{if} \quad \rho(r,z) > \rho_t.
 \end{cases}
\end{align}
Throughout the range $|t|<\frac{1}{a}$ the quantity \eqref{tanvec} above is always positive and $\gamma$ is spacelike, irrespective of the values of $r,z$. In conclusion we know that \[\left\{ p \in H_t \colon \rho(r,z) < \rho_t \right\} \subset M_{CV}(t).\] But we are also able to confine $M_{CV}(t)$ as follows. Calculation gives
\begin{equation} \label{hypsurf}
 g^{tt} = h^2a^2t^2 - 1
\end{equation}
and by Prop.~\ref{prop-hypersurfaces-in-coord}, $H_t$ is a spacelike hypersurface for $|t|<\frac{1}{a}$. The situation changes when $t \geq \frac{1}{a}$. Then we have
\begin{equation} H_t \quad \text{is} \quad
 \begin{cases}
  \text{timelike} \qquad 	& \text{for} \quad	 \rho < \bar{\rho_{t}}, \\
  \text{spacelike} \qquad 	& \text{for} \quad 	 \rho > \bar{\rho_{t}},
 \end{cases}
\end{equation}
where $\bar{\rho_{t}}$ is determined by $h(\bar{\rho_t})=\frac{1}{at}$, i.e., $\bar{\rho_{t}}$ is the (unique) zero of \eqref{hypsurf}.  Comparing the equations for $\rho_t$ and $\bar{\rho_{t}}$ yields the first part of the following inequality and the second part results from the properties of $h$:
\begin{equation}
 \rho_t \leq \bar{\rho_{t}} < d.
\end{equation}
In any given hypersurface $H_t$ the region $\rho > \bar{\rho_{t}}$ is thus free of timelike curves which implies $M_{CV}(t) \subset \{p \in M \colon \rho(p) \leq \bar{\rho_t} \}$. In particular, all CTCs in $H_t$, including the ones in the torus $\rho = \rho_t$, are contained in the torus $\rho=\bar{\rho_{t}}$. This situation is illustrated in Figure \ref{grtorus}.
\begin{figure}[here] 
\includegraphics{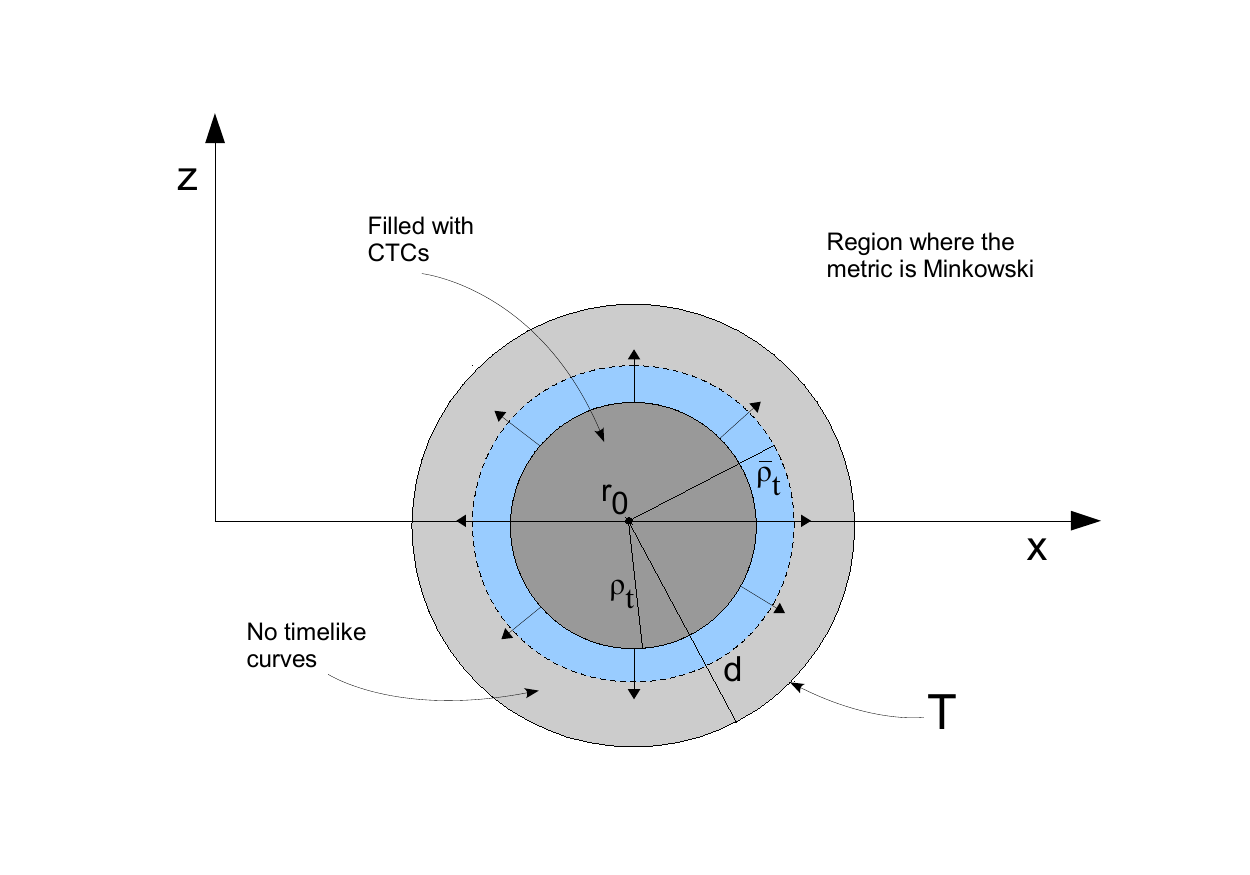}
\caption{For fixed $t>\frac{1}{a}$ this is a section of the torus $T$ containing the $x$ and $z$-axis (the $t$ and $y$-coordinates are suppressed). The growth of the two smaller tori $\rho=\rho_t$ and $\rho=\bar{\rho_t}$ with $t$ is indicated by the outward pointing arrows.}
\label{grtorus}
\end{figure}

Note that for all $t$, viewed as subsets of $H_t$, the set $M_{CV}(t)$ is contained in the compact set $T$.

\subsection{Tipping of Lightcones}
\begin{figure}[here]
 \centering
 \includegraphics[scale=0.61]{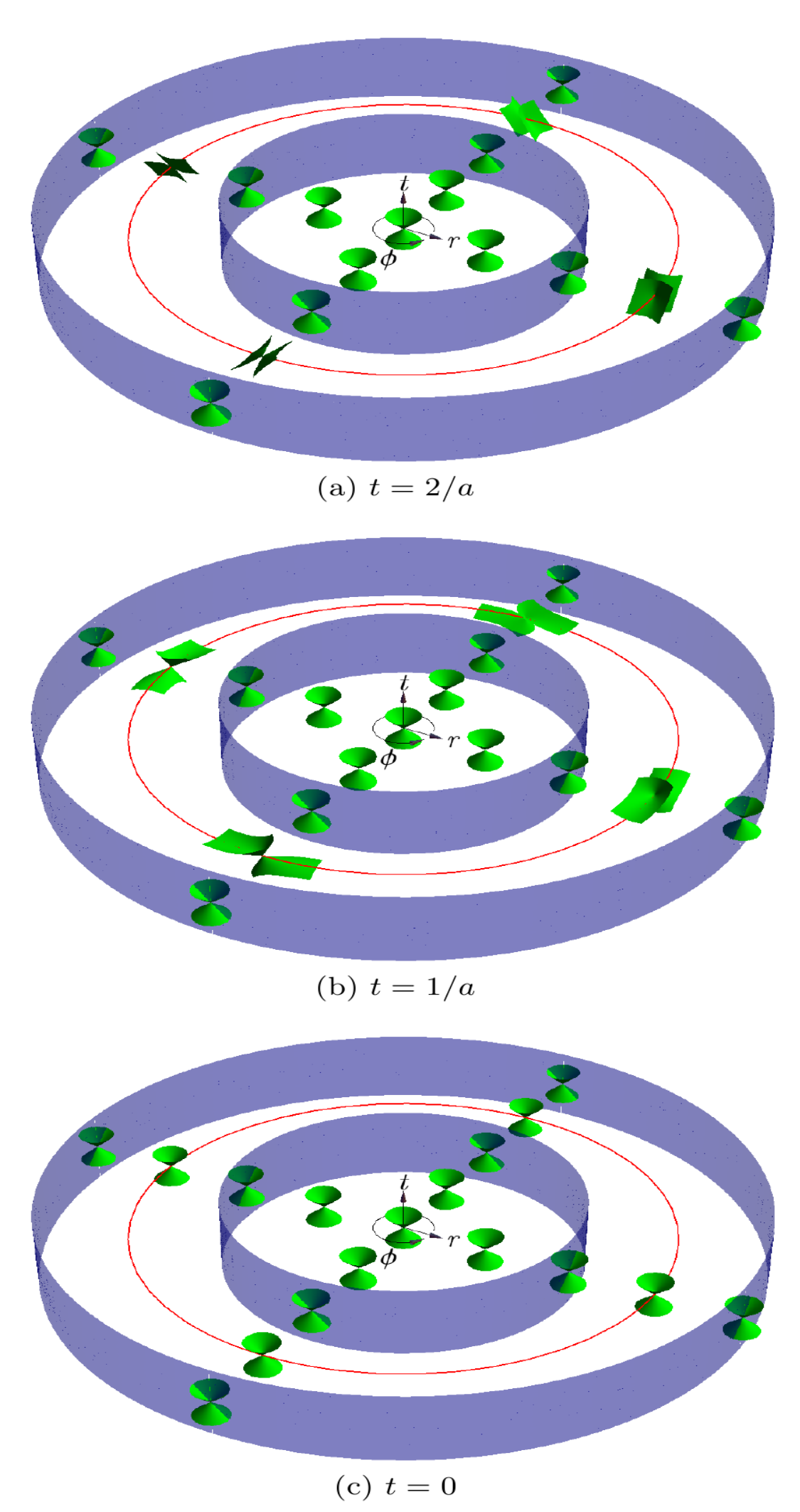}
 \caption{Inside the torus $T$ lightcones tip over as $t$ increases. T is represented by the blue stripes and the red curve is given by $\rho = 0$. The latter becomes a closed null curve at $t=\frac{1}{a}$ and is a CTC for $t>\frac{1}{a}$. This illustration is due to T. Sch\"onfeld \cite{Schoenfeld}.}
 \label{tipping}
\end{figure}
The fact that the curves $\gamma_{r,z}$ from above are CTCs filling a torus of growing radius $\rho_t$, implies that the tangent vector $\gamma_{r,z} ' = \partial_\phi$ becomes timelike at points inside the torus $T$, where it was null or spacelike before (i.e., for smaller values of $t$). This is reflected by the following argument, where we place ourselves at $r=r_0, z=0$ and consider a tangent vector of the form $X=\alpha \partial_\phi + \beta \partial_z$ with real coefficients $\alpha$, $\beta$. We have
\begin{equation}
 \langle X,X \rangle = \beta^2 + \alpha^2 r_0^2 (1-a^2t^2),
\end{equation}
which is positive for small values of $t$, zero at a certain $t_0$ and negative for $t$ large enough. Correspondingly, the causal character of $X$ is spacelike first, lightlike at $t_0$ and then timelike. Using the concept of lightcones, $X$ is initially (i.e., $t$ small) situated outside the lightcone at the corresponding point of $M$, at $t_0$ it lies on the lightcone and can finally be found inside the lightcone when $t$ has grown sufficiently. This process of tipping of lightcones shows geometrically how CTCs emerge and is depicted in figure \ref{tipping} (from \cite{Schoenfeld}), where we can see the set $\{ (t,r,\phi,z) \in M \colon z=0 \}$. The coordinate $t$ increases in the vertical direction of the picture and for the values $t= 0, \frac{1}{a}, \frac{2}{a}$ the same set of lightcones in the planes spanned by $r$ and $\phi$ is plotted.

\end{document}